\newcommand{\beq}{\begin{eqnarray}}
\newcommand{\eeq}{\end{eqnarray}}
\newcommand{\nn}{\nonumber}
\newtheorem{thm}{Theorem}[section]
\newtheorem{defn}[thm]{Definition}
\newtheorem{prop}[thm]{Proposition}
\newtheorem{rem}[thm]{Remark}
\newtheorem{lem}[thm]{Lemma}
\begin{document}

\title{The principle of stationary action in the calculus of variations}
\author{E. L\'opez, A. Molgado, J. A. Vallejo}
\address{Facultad de Ciencias\\
 Universidad Aut\'onoma de San Luis Potos\'{\i} (M\'exico)\\
 Lateral Av. Salvador Nava s/n, SLP 78290.}
\email{emanuellc@uaslp.edu.mx,molgado@fc.uaslp.mx,jvallejo@fc.uaslp.mx}
\keywords{Stationary action; Functional extrema; conjugate points; oscillatory solutions; Lane-Emden equations.}
\subjclass[2010]{49K15, 49S05, 34K11.}

\begin{abstract}
We review some techniques from non-linear analysis
in order to investigate critical paths for the action functional
in the calculus of variations applied to physics. Our main intention
in this regard is to  
expose precise mathematical conditions for critical paths 
to be minimum solutions in a variety of situations of interest in Physics.  Our claim is that,
with a few elementary techniques,
a systematic analysis (including the  domain for which critical points are genuine minima)
of non-trivial models is possible.
We present specific models arising in modern physical 
theories in order to make clear the ideas here exposed.
\end{abstract}

\maketitle

\section{Introduction}

The calculus of variations is one of the oldest techniques of differential calculus. Ever since its creation
by Johann and Jakob Bernoulli in 1696-97, to solve the problem of the brachistochrone (others solved it, too: Newton,
Leibniz, Tschirnhaus and L'H\^opital, but their methods were different), it has been applied to a variety of problems
both in pure and applied mathematics. While occupying a central place in modern engineering techniques (mainly in control theory, see \cite{burghes}, \cite{hermes}, \cite{lebedev}, \cite{sussmann}, \cite{zaslavski}), it
is in physics where its use has been promoted to the highest level, that of the basic principle to obtain the equations
of motion, both in the dynamics of particles and fields: the principle of stationary action (see \cite{basdevant},
\cite{carlini2}, \cite{novikov}, \cite{ramond}). Accordingly
to that point of view, almost every text on mechanics include a chapter on the calculus of variations although,
surprisingly enough, the treatment in these texts is expeditious and superficial, directly oriented towards the obtention of Euler-Lagrange's
equations, leaving aside the question of whether the solutions are true minima or maxima, despite the importance of this distinction (for instance, while in fields such as optics one is interested in the minimal optical length, in stochastic dynamics one seeks to maximize the path entropy \cite{wang}. On the other hand, while the principle of stationary action just selects critical paths, experimentally an actual minimum is detected in some systems, see \cite{garrett}).

It is interesting to note that the principle was once called the principle of least action, although it was soon realized that many physical phenomena does not follow a trajectory that realizes a minimum of the action, but just a critical path (the main example is the harmonic oscillator, whose trajectory in phase space only minimizes the action for a time interval of length which depends on its frequency, see Sec. V of \cite{gray}, which we recommend to get details about the physical meaning of the action integral and its extremals). In view of these phenomena, and because the emphasis was on the equations of motion, the elucidation of the true nature of the critical paths of the action functional was omitted, and the interest focused on the stationary property. However, some
recent papers have made a ``call to action'' (the pun is not ours, see \cite{taylora}, \cite{moore}, \cite{gray}, \cite{gray0}), renewing the interest in their extremal properties, not only their character of paths rendering the action stationary.

Our aim in this paper is twofold: on the one hand, to offer a concise, yet rigorous and self-contained, overview of some elementary techniques of non-linear analysis to investigate the extremals of an action functional. On the other hand, we intend to show several
non trivial examples of physical interest illustrating the use of these techniques. We have avoided the well-known cases, so our examples go beyond oscillators and central potentials, and are taken from modern theories, ranging from astrophysics (Lane-Emden equations) to relativistic particles with energy dissipation.
Each one of these examples has been chosen to illustrate some particular feature. Thus, example \ref{driven} shows a Lagrangian for a dissipative system; in example \ref{n5}, as a bonus of the theory developed, we explicitly compute the solution (and its zeros) of an equation of the form $y'' +q(x)^2 y=0$ with $q(x)$ rational; example \ref{entropia}
contains a justification of the use of Lagrange multipliers in the maximum entropy principle, etc.

We also differ from previous works, such as \cite{gray} or \cite{gray0}, in the flavour of the treatment: we feel that  discussions  trying to explain some plain analytic effects in physical terms are too lengthy, and sometimes add confusion instead of enlightenment
when it comes to explicit computations. Thus, we center our exposition around the analytic definition and properties of the G\^ateaux derivatives of functionals defined by integration (Lagrange functionals) and the techniques for the study of the behaviour of solutions of differential equations such as convexity or the comparison theorems of the Sturmian theory. This will be particularly patent in section \ref{sturmian}, where we show that the main result in \cite{gray} is a direct consequence of well-known properties of the zeros of the Jacobi equation (see Proposition \ref{minimum} and comments).

We offer short proofs for those results that seem to be not common in the physics literature. The bibliography, although by no means complete, is somewhat lengthy as a result of our efforts to make it useful.

\section{Calculus of variations}
In this section we will briefly describe some basic concepts in the  
calculus of variations in 
order to set up our notation and conventions, and also in order to introduce the 
Jacobi equation and conjugate points as explicit criteria for a given extremal solution to be a minimum.  We will start by discussing G\^ateaux derivatives and extrema of functionals.
For general references on the topics of functional analysis and calculus of variations, see \cite{curtain}, \cite{flett}, \cite{gelfand}, \cite{giaquinta}, \cite{sagan}, \cite{smith}, \cite{troutman}, \cite{vanbrunt}, \cite{zeidler}. Note that we
deal with the local aspects of the theory, exclusively. There are several approaches to the global setting, some of these
were developed in \cite{shlomo}, \cite{pedroluis}, \cite{saunders}; more modern versions are developed in \cite{krupkova2}.
For detailed accounts of the theory involved in the global analysis, see \cite{krupkova1} and \cite{sardanas}.

\subsection{G\^ateaux derivatives}
Let $(E,\parallel \cdot \parallel)$ be a Banach space, $D \subseteq E$ an open subset of $E$ and $y_0 \in D$. Given a functional
$J:D \rightarrow \mathbb{R}$, if $v \in E$ is a non zero vector and $|t|$ small enough, $y_0 +tv$ will lie in $D$ so
the following definition makes sense.
\begin{defn}
Whenever it exists, the limit
$$
\delta J(y_0,v):=\lim_{t\to 0}\frac{J(y_0 +tv)-J(y_0 )}{t}
$$
is called the G\^ateaux derivative (or first variation) of $J$ at $y_0$ in the direction $v\in E$.\\
This defines a mapping $\delta J(y_0,\cdot ):E\to \mathbb{R}$. If this mapping is linear and continuous, we denote it by $J'(y_0)$ and say that $J$ is G\^ateaux differentiable at $y_0$. Thus, under these conditions, $\delta J(y_0,v)=J'(y_0)(v)$.
Another common notation is $\delta J(y_0,v)=\delta_{y_0}J(v)$.\\
The $y_0 \in D$ such that $J'(y_0 )=0$ are called critical points of the functional $J$.
\end{defn}
The extension to higher-order derivatives is immediate. If, for a fixed $v\in E$, $\delta J(z,v)$ exists for every $z\in D$, we have a mapping $D:\to \mathbb{R}$ and we can compute its G\^ateaux derivative. Given an $y_0 \in D$ and $z,v\in E$, the second
G\^ateaux derivative (or second variation) of $J$ at $y_0$ in the directions $v$ and $z$ (in that order) is
$$
\delta^2 J(y_0,v,z):=\lim_{t\to 0}\frac{\delta J_{y_0 +tz}(v)-\delta_{y_0}J(v)}{t}.
$$
If $\delta^2 J(y_0,v,z)$ exists for any $z,v\in E$, and $(v,z)\mapsto \delta^2 J(y_0,v,z)$ is bilinear and continuous, we say that
$J$ is twice G\^ateaux differentiable at $y_0 \in D$, and write $J''(y_0)$ for this mapping. With these notations, we will write
$$
\delta^2_{y_0} J(v)=\delta^2 J(y_0,v):=\delta^2 J(y_0 ,v,v).
$$
\begin{rem}\label{rem1}
For fixed $y_0 \in D$ and $v\in E$, if we consider the function $j_{(y_0,v)} :\mathbb{R}\to\mathbb{R}$ by $j_{(y_0,v)} (t)=J(y_0 +tv)$, it is obvious that it is defined in some open neighborhood of $0$, $]-\varepsilon ,\varepsilon [$, and the higher-order variations of $J$ are given by
$$
\delta^n J(y_0 ,v)=j^{(n)}_{(y_0,v)}(0).
$$
\end{rem}

We will be interested in a particular class of functionals. To introduce it, we first need a technical observation: given an open subset $U\subset\mathbb{R}^3$, the set
$$
D_U =\{ y\in\mathcal{C}^1 ([a,b]): \forall x\in [a,b],(x,y(x),y'(x))\in U\}
$$
(the prime denotes derivation, although we will also make free use of the physicist's dot notation for derivatives)
is evidently contained in the Banach space $(\mathcal{C}^1 ([a,b]),\parallel \cdot \parallel)$, endowed with the norm
$$
|| y|| = || y ||_0 +|| y' ||_0 ,
$$
where $\parallel \cdot \parallel_0$ is the supremum norm. Moreover, $D_U \subset \mathcal{C}^1 ([a,b])$ is an open subset. This follows from the fact that for a given $y_0 \in D_U$ the set $\{x,y_0(x),y'_{0}(x):x\in [a,b]\}$ is compact, so it has an open
neighborhood contained in $U$.
\begin{defn}\label{deflag}
A function $L\in\mathcal{C}^2 (U)$ is called a Lagrangian. To every Lagrangian it corresponds a functional $J:D_U \to\mathbb{R}$, called its action, defined by
$$
J(y)=\int^b_a L(x,y(x),y'(x))\mathrm{d}x.
$$
\end{defn}
\begin{prop}
For any $U\subset\mathbb{R}^3$, the action $J$ is G\^ateaux differentiable on $D_U$.
\end{prop}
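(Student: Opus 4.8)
The plan is to fix an arbitrary $y_0 \in D_U$ and an arbitrary direction $v \in \mathcal{C}^1([a,b])$, and to show directly that the limit defining $\delta J(y_0, v)$ exists and that the resulting map $v \mapsto \delta J(y_0, v)$ is linear and continuous. Following Remark \ref{rem1}, I would reduce the computation to differentiating the scalar function $j_{(y_0,v)}(t) = J(y_0 + tv) = \int_a^b L(x, y_0(x) + tv(x), y_0'(x) + tv'(x))\,\mathrm{d}x$ at $t = 0$, so the whole question becomes one of differentiation under the integral sign.

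To justify that, I would first exploit the compactness observation already used to prove that $D_U$ is open: the curve $K = \{(x, y_0(x), y_0'(x)) : x \in [a,b]\}$ is compact and admits a compact neighborhood $\tilde{K}$ with $K \subset \tilde{K} \subset U$. Because $v \in \mathcal{C}^1$, the perturbation $(tv, tv')$ tends to $0$ uniformly as $t \to 0$, so for $|t|$ small the entire perturbed curve $x \mapsto (x, y_0(x) + tv(x), y_0'(x) + tv'(x))$ stays inside $\tilde{K}$. Since $L \in \mathcal{C}^2(U)$, its first partials $L_y, L_{y'}$ are continuous and hence bounded on $\tilde{K}$; this uniform bound provides an integrable dominating function (or, equivalently, lets me control the remainder in a mean-value-theorem estimate of the difference quotient). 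Differentiating under the integral and applying the chain rule, I would obtain the explicit formula
\[
\delta J(y_0, v) = \int_a^b \left[ L_y(x, y_0, y_0')\,v(x) + L_{y'}(x, y_0, y_0')\,v'(x) \right] \mathrm{d}x.
\]

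With this formula in hand, linearity in $v$ is immediate, since both $v$ and $v'$ enter linearly and the coefficients $L_y(x,y_0,y_0')$, $L_{y'}(x,y_0,y_0')$ do not depend on $v$. For continuity, I would note that these coefficients are continuous functions on the compact interval $[a,b]$, hence bounded by some $M > 0$, which yields the estimate $|\delta J(y_0,v)| \le M(b-a)(\|v\|_0 + \|v'\|_0) = M(b-a)\,\|v\|$. This exhibits $v \mapsto \delta J(y_0,v)$ as a bounded, hence continuous, linear functional, so $J'(y_0)$ exists and $J$ is G\^ateaux differentiable at $y_0$.

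I expect the only genuinely delicate step to be the interchange of limit and integral, that is, the differentiation under the integral sign. Everything else, namely the linearity and the norm estimate, is routine once the explicit integral formula is established. The crux is thus to convert the abstract difference quotient into a pointwise derivative of the integrand, uniformly in $x$, which is exactly what the uniform bound on $L_y$ and $L_{y'}$ over the compact neighborhood $\tilde{K}$ delivers.
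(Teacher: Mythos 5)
Your proposal is correct and follows essentially the same route as the paper: reduce to the scalar function $j_{(y_0,v)}(t)$ via Remark \ref{rem1}, differentiate under the integral sign to obtain the explicit formula \eqref{eqn1}, then read off linearity and deduce continuity from boundedness of the coefficients on the compact interval. The only difference is one of detail, not of method: the paper simply invokes Leibniz's theorem for the interchange of derivative and integral, whereas you spell out its justification (compact neighborhood of the curve inside $U$, uniform bounds on $D_2L$, $D_3L$ there), which is a welcome but inessential refinement.
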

\begin{proof}
Let $y\in D_U$. Taking into account the remark \ref{rem1}, note that for any $t\in]-\varepsilon ,\varepsilon [$ (applying Leibniz's theorem of derivation under the integral):
$$
j'_{(y,v)}(t)=\int^b_a \frac{\mathrm{d}}{\mathrm{d}t}\left( L(s,y(s)+tv(s),y'(s)+tv'(s)) \right) \mathrm{d}s.
$$
Evaluating the derivative at $t=0$, we get
\beq
\delta J(y,v)=\int^b_a \left( v(s)D_2 L(s,y(s),y'(s))+v'(s)D_3 L(s,y(s),y'(s)) \right) \mathrm{d}s.
\label{eqn1}
\eeq
Note that $\delta J(y,v)$ is linear in $v$. Moreover,
$$
|\delta J(y,v)|\leq ||v||\int^b_a \left( |D_2 L(s,y(s),y'(s))|+|D_3 L(s,y(s),y'(s))| \right) \mathrm{d}s,
$$
where the integral exists (because the $D_i L$, $i\in\{1,2\}$, are continuous on $U$ and $y,y'$ on $[a,b]$) and it is a number depending only on $y$,
thus constant for fixed $y$. Then, $\delta J(y,v)$ is also continuous in $v$.
\end{proof}

\subsection{Local extrema of functionals}\label{locext}
\begin{defn}
Let $J:D \rightarrow \mathbb{R}$ be a functional and let 
$y_0 \in D$. We will say that $J$ has a local maximum 
(local minimum, respectively) 
in $y_0$ if for all $y \in G$, where $G \subset D$ is a 
convex neighborhood of the point $y_0$, it follows that
\beq
\nn
J(y_0) \geq J(y) 
\eeq
($J(y_0)\leq J(y)$, respectively).
\end{defn}
\begin{thm}
Let $J:D\to\mathbb{R}$ be a functional and $y_0 \in D$. Then:
\begin{enumerate}\label{teoremaco}
\item{(Necessary condition)} If $J$ has a local extremal and the variation $\delta J(y_0 ,v)$ exists
for some $v\in E$, then $\delta J(y_0 ,v)=0$. Thus, if $J$ is G\^ateaux differentiable at $y_0$, $J'(y_0 )=0$.
\item{(Sufficient condition for a minimum)} The functional $J$ has a local minimum at $y_0$ whenever the following hold:
\begin{enumerate}[(a)]
\item\label{item1} For each $v\in E$, $\delta J(y_0 ,v)=0$.
\item\label{item2} (Coercivity) For any $y$ in a convex neighborhood of $y_0$, the second variation $\delta^2 J(y,v)$ exists for each
$v\in E$. Moreover, there exists a $c>0$ such that
$$
\delta^2 J(y_0 ,v)\geq c||v||^2,
$$
for all $v\in E$.
\item{(Weak continuity)}\label{item3} Given $\varepsilon >0$, there exists an $\eta >0$ such that
$$
|\delta^2 J(y ,v)-\delta^2 J(y_0 ,v)|\leq \varepsilon ||v||^2 ,
$$
for any $v\in E$ and $y$ satisfying $||y-y_0 ||<\eta$.
\end{enumerate}  
\end{enumerate}
\end{thm}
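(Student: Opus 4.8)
The plan is to reduce both statements to elementary one-variable calculus via the scalar restriction $j_{(y_0,v)}(t)=J(y_0+tv)$ introduced in Remark \ref{rem1}. For the necessary condition, I would fix a direction $v\in E$ for which $\delta J(y_0,v)$ exists and set $\phi(t):=j_{(y_0,v)}(t)$, defined on a neighborhood $(-\varepsilon,\varepsilon)$ of $0$. Since $J$ has a local extremum at $y_0$ and the segment $y_0+tv$ lies in the convex neighborhood $G$ for small $|t|$, the scalar function $\phi$ has a local extremum at the interior point $t=0$. The existence of $\delta J(y_0,v)$ means precisely that $\phi$ is differentiable at $0$ with $\phi'(0)=\delta J(y_0,v)$, so Fermat's theorem forces $\phi'(0)=0$, that is $\delta J(y_0,v)=0$. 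If $J$ is G\^ateaux differentiable this holds for every $v$, whence $J'(y_0)=0$.

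For the sufficient condition the idea is a second-order Taylor expansion of the same restriction. Given $v\in E$ small, consider $\phi(t)=J(y_0+tv)$ on $[0,1]$. I would first record the two identities $\phi'(t)=\delta J(y_0+tv,v)$ and $\phi''(t)=\delta^2 J(y_0+tv,v)$, both obtained by recentering at the base point $y_0+tv$ and invoking Remark \ref{rem1}. Hypothesis (b) guarantees that $\phi''$ exists on the relevant interval (hence $\phi'$ is continuous there), so Taylor's theorem with Lagrange remainder supplies a $\theta\in(0,1)$ with
$$
J(y_0+v)-J(y_0)=\phi(1)-\phi(0)=\delta J(y_0,v)+\tfrac{1}{2}\delta^2 J(y_0+\theta v,v).
$$
Assumption (a) annihilates the first-order term, leaving $J(y_0+v)-J(y_0)=\tfrac{1}{2}\delta^2 J(y_0+\theta v,v)$.

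It then remains to bound the remainder from below. Combining the weak-continuity estimate (c) with the coercivity (b), for $\|v\|<\eta$ (so that $\|\theta v\|<\eta$ as well) I would write
$$
\delta^2 J(y_0+\theta v,v)\geq \delta^2 J(y_0,v)-\varepsilon\|v\|^2\geq (c-\varepsilon)\|v\|^2,
$$
and then fix $\varepsilon=c/2$, taking $\eta$ to be the associated radius from (c), shrunk if necessary so that $y_0+v\in G$. This yields $J(y_0+v)-J(y_0)\geq \tfrac{c}{4}\|v\|^2\geq 0$ for all small $v$, which is exactly the desired local minimum. The points needing care are the identification $\phi''(t)=\delta^2 J(y_0+\theta v,v)$ at the interior parameter $\theta$ (not merely at $0$, where Remark \ref{rem1} applies directly) and the requirement that the whole segment $\{y_0+tv:t\in[0,1]\}$ remain inside the convex neighborhood on which (b) and (c) are available. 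Both are settled by convexity of $G$, since $y_0+\theta v=(1-\theta)y_0+\theta(y_0+v)$ lies in $G$ whenever its endpoints do.
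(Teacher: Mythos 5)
Your proposal is correct and follows essentially the same route as the paper: the scalar restriction $j_{(y_0,v)}(t)=J(y_0+tv)$ with Fermat's theorem for the necessary condition, and a second-order Taylor expansion with Lagrange remainder combined with coercivity and weak continuity for the sufficient condition. The only differences are cosmetic (you take $\varepsilon=c/2$ where the paper takes $\varepsilon=c/4$, both yielding the bound $\tfrac{c}{4}\|v\|^2$), and you in fact make explicit two points the paper leaves tacit, namely the identification $\phi''(\theta)=\delta^2 J(y_0+\theta v,v)$ at an interior parameter and the convexity argument keeping the segment inside the neighborhood where (b) and (c) hold.
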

\begin{proof}\leavevmode
\begin{enumerate}
\item Consider $j_{(y_0,v)} (t)=J(y_0 +tv)$, so if $J$ has a local extremal at $y_0$, $j_{(y_0,v)}$ has a local extremum at
$t=0$. Then, it must be (recall remark \ref{rem1}) $0=j'_{(y_0,v)} (0)=\delta J(y_0 ,v)$.
\item Suppose each of \eqref{item1},\eqref{item2},\eqref{item3} holds. As before, we have $j'_{(y_0,v)} (t)=\delta J(y_0 +tv,v)$ and $j''_{(y_0,v)} (t)=\delta^2 J(y_0 +tv,v)$. The hypothesis on the second derivatives of $J$ allows us to develop
$j_{(y_0,v)} (t)$ by Taylor in the interval $[0,1]$, and there exists a $\xi\in ]0,1[$ such that
$$
J(y_0 +v)-J(y_0)=j_{(y_0,v)} (1)-j_{(y_0,v)} (0)=\frac{1}{2}j''_{(y_0,v)} (\xi).
$$
As $j''_{(y_0,v)} (\xi)=\delta^2 J(y_0 +\xi v,v)$, we have the following bound:
\begin{align*}
& J(y_0 +v)-J(y_0)= \\
& \frac{1}{2}\delta^2 J(y_0,v)+\frac{1}{2}\left( \delta^2 J(y_0 +\xi v,v)- \frac{1}{2}\delta^2 J(y_0,v)\right)\geq \\
& \frac{1}{2}c||v||^2 + \frac{1}{2}\left( \delta^2 J(y_0 +\xi v,v)- \frac{1}{2}\delta^2 J(y_0,v)\right)
\end{align*}
Taking $\varepsilon =c/4$ in \eqref{item3}, there exists an $\eta >0$ such that
$$
|\delta^2 J(y ,v)-\delta^2 J(y_0 ,v)|\leq \frac{c}{4} ||v||^2 ,
$$
for $||y-y_0 ||<\eta$. Choosing now $||v||<c\eta /2$, it is $||y_0 +\xi v-y_0 ||\leq ||v||<c\eta /2$, so
$$
|\delta^2 J(y_0 +\xi v ,v)-\delta^2 J(y_0 ,v)|\leq \frac{c}{4} ||v||^2 .
$$
Substituting:
$$
J(y_0 +v)-J(y_0)\geq \frac{1}{2}c||v||^2 -\frac{1}{4}c||v||^2=\frac{1}{4}c||v||^2 >0,
$$
so $J$ has a local minimum.
\end{enumerate}
\end{proof}
\begin{rem}
Writing $c<0$ and reversing the inequality for $\delta^2 J(y_0 ,v)$ in \eqref{item2}, we get sufficient conditions for a local maximum.
\end{rem}
\begin{rem}\label{localmin}
Note that condition \eqref{item1} alone does not guarantee the existence of a local minimum, see counter-examples in
\cite{pars}, \S 2.10.
\end{rem}
It is interesting to particularize the condition $\delta J(y_0 ,v)=0$ to the case of an action functional. For this, we need a couple of technical results whose proof is straightforward, but anyway can be found in any of the references cited at the beginning of this section. We will denote
$\mathcal{C}^k_0 ([a,b])=\{ f\in \mathcal{C}^k ([a,b]): f(a)=0=f(b)\}$.
\begin{lem}[Lagrange]\label{lema1}
Let $f\in\mathcal{C}([a,b])$ be a continuous real-valued 
function over the interval $[a,b]$ such that
\beq
\nn
\int_{a}^{b}f(x)\mu(x)\, \mathrm{d}x=0
\eeq
for all $\mu\in\mathcal{C}_{0}([a,b])$. Then it follows $f\equiv 0$.
\end{lem}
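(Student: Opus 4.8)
The plan is to argue by contradiction, producing an explicit ``bump'' test function that detects any point where $f$ fails to vanish. Suppose, contrary to the claim, that $f\not\equiv 0$. Then there is a point $x_0$ with $f(x_0)\neq 0$; replacing $f$ by $-f$ if necessary, I may assume $f(x_0)>0$. By continuity of $f$ on $[a,b]$, such an $x_0$ can be taken in the open interval $(a,b)$ (if $f$ were nonzero only at an endpoint, continuity would force it to remain nonzero on a one-sided neighborhood, which contains interior points), and there exists $\delta>0$ with $[x_0-\delta,x_0+\delta]\subset(a,b)$ on which $f(x)>f(x_0)/2>0$.

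Next I would exhibit a concrete $\mu\in\mathcal{C}_0([a,b])$ that is non-negative, vanishes outside $[x_0-\delta,x_0+\delta]$, and is strictly positive on the interior of that subinterval. A convenient choice is
$$
\mu(x)=\begin{cases} (x-(x_0-\delta))^2\,((x_0+\delta)-x)^2, & x\in[x_0-\delta,x_0+\delta],\\[2pt] 0, & \text{otherwise,}\end{cases}
$$
which is continuous on $[a,b]$ and satisfies $\mu(a)=0=\mu(b)$, hence lies in $\mathcal{C}_0([a,b])$.

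With this $\mu$ the hypothesis is violated: the integrand $f\mu$ vanishes outside $[x_0-\delta,x_0+\delta]$ and is continuous and strictly positive on the open subinterval, so
$$
\int_a^b f(x)\mu(x)\,\mathrm{d}x=\int_{x_0-\delta}^{x_0+\delta} f(x)\mu(x)\,\mathrm{d}x>0,
$$
contradicting the assumption that this integral is zero for every admissible $\mu$. Therefore no such $x_0$ exists and $f\equiv 0$.

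As for where care is needed: the argument is elementary, so there is no deep obstacle; the single point deserving attention is the construction of the test function. I must ensure that $\mu$ belongs to the prescribed class (merely continuous with vanishing endpoint values, so the simple quadratic bump above suffices, though a $\mathcal{C}^\infty$ variant would serve equally well if smoothness were demanded) and that its support sits strictly inside $(a,b)$, which makes the boundary conditions $\mu(a)=\mu(b)=0$ automatic. The essential idea is the localization of $f$ to a subinterval on which it keeps a constant sign, which is exactly what turns the integral hypothesis into a strict inequality and hence a contradiction.
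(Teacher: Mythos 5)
Your proof is correct: the localization of $f$ to a subinterval where it keeps a sign, followed by the explicit quadratic bump $\mu\in\mathcal{C}_0([a,b])$, yields a strictly positive integral and the desired contradiction, and the reduction to an interior point $x_0$ is handled properly. Note that the paper itself states this lemma without proof (deferring to its references); your argument is precisely the classical bump-function proof found in those standard texts, so there is nothing differing to compare against.
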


\begin{lem}[DuBois-Reymond]\label{lema2} Let $f\in\mathcal{C}([a,b])$ and $g\in\mathcal{C}^{1}([a,b])$ such
that
\[
\int_{a}^{b}(f(x)\mu(x)+g(x)\mu'(x))\, \mathrm{d}x=0
\]
for all $\mu\in\mathcal{C}_{0}^{1}([a,b])$. Then it follows $g'=-f$.
\end{lem}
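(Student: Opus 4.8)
The plan is to reduce the hypothesis to the form covered by Lagrange's Lemma~\ref{lema1} by eliminating the derivative $\mu'$ through integration by parts. Since $g \in \mathcal{C}^1([a,b])$ and every admissible test function $\mu \in \mathcal{C}_0^1([a,b])$ satisfies $\mu(a)=\mu(b)=0$, the boundary term in
$$
\int_a^b g(x)\mu'(x)\,\mathrm{d}x = \bigl[ g(x)\mu(x)\bigr]_a^b - \int_a^b g'(x)\mu(x)\,\mathrm{d}x
$$
vanishes, so that $\int_a^b g\mu'\,\mathrm{d}x = -\int_a^b g'\mu\,\mathrm{d}x$. Substituting this into the assumed identity, I obtain
$$
\int_a^b \bigl( f(x) - g'(x)\bigr)\mu(x)\,\mathrm{d}x = 0 \qquad \text{for all } \mu \in \mathcal{C}_0^1([a,b]).
$$

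Next I would set $h := f - g'$. Because $f$ is continuous and $g$ is of class $\mathcal{C}^1$, the function $h$ is continuous on $[a,b]$, and the displayed relation states precisely that $h$ integrates to zero against every $\mu \in \mathcal{C}_0^1([a,b])$. I would then invoke the fundamental lemma of the calculus of variations to conclude $h \equiv 0$, that is, $g'=-f$, which is the desired claim.

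The single point requiring care is that Lemma~\ref{lema1} is stated for test functions in $\mathcal{C}_0([a,b])$, whereas here I only have vanishing of the integral against the smaller class $\mathcal{C}_0^1([a,b])$. This is not a genuine obstacle, since the standard proof of Lemma~\ref{lema1} produces its test function as a $\mathcal{C}^1$ (indeed smooth) bump: if $h(x_0)\neq 0$ at some interior point $x_0$, continuity gives an interval $(x_0-\delta,x_0+\delta)\subset(a,b)$ on which $h$ keeps a constant sign, and choosing for $\mu$ the function equal to $\bigl((x-x_0)^2-\delta^2\bigr)^2$ on that interval and $0$ elsewhere — which lies in $\mathcal{C}_0^1([a,b])$, is nonnegative, and vanishes only at the endpoints of the interval — forces $h\mu$ to keep a constant sign there, whence $\int_a^b h\mu\,\mathrm{d}x \neq 0$, a contradiction (the case of an endpoint of $[a,b]$ follows by continuity once $h\equiv 0$ on the open interval). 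Thus the conclusion of Lemma~\ref{lema1} survives with $\mathcal{C}_0^1$ test functions, and the proof is complete. The thing to get right is therefore the interplay between the regularity needed to integrate by parts ($g\in\mathcal{C}^1$) and the regularity of the admissible variations.
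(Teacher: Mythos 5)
Your reduction is the standard (and correct) route: since the lemma is stated with $g\in\mathcal{C}^1([a,b])$, integration by parts is legitimate, the boundary term dies because $\mu(a)=\mu(b)=0$, and the hypothesis becomes $\int_a^b (f-g')\mu\,\mathrm{d}x=0$ for all $\mu\in\mathcal{C}^1_0([a,b])$. The paper itself gives no proof of this lemma (it defers to its references), so there is no argument of the paper to compare against; note only that your scruple about the test class is well taken and well handled: Lemma \ref{lema1} as stated requires vanishing against all of $\mathcal{C}_0([a,b])$, which is a larger class than $\mathcal{C}^1_0([a,b])$, so it cannot be cited verbatim, and your $\mathcal{C}^1$ bump $\bigl((x-x_0)^2-\delta^2\bigr)^2$ repairs this correctly.

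However, your last line is a sign non-sequitur, and it matters. From $h:=f-g'\equiv 0$ you get $g'=f$, not $g'=-f$. What your (correct) computation actually proves is $g'=f$; the conclusion $g'=-f$ printed in the statement is itself erroneous. Indeed, take $f\equiv 1$, $g(x)=x$ on $[0,1]$: then $\int_0^1(\mu+x\mu')\,\mathrm{d}x=\int_0^1(x\mu)'\,\mathrm{d}x=0$ for every $\mu\in\mathcal{C}^1_0([0,1])$, yet $g'=f\neq -f$. The sign $g'=f$ is also the one the paper needs: in the derivation of Theorem \ref{eulero} the lemma is applied with $f=D_2L$ and $g=D_3L$, and the Euler--Lagrange equation \eqref{eq:EulerLagrange} reads $D_2L-\frac{\mathrm{d}}{\mathrm{d}x}D_3L=0$, i.e. $g'=f$; the conclusion $g'=-f$ would instead produce the wrong equation $D_2L+\frac{\mathrm{d}}{\mathrm{d}x}D_3L=0$. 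So do not force your (correct) conclusion $h\equiv 0$ into the printed form: state that the proof yields $g'=f$, and flag the sign in the lemma's statement as a typo rather than silently absorbing it into your final sentence.
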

Now, a simple integration by parts in \eqref{eqn1}, and the application of Lemmas \ref{lema1}, \ref{lema2} and Theorem \ref{teoremaco},
leads directly to the following result.
\begin{thm}[Euler-Lagrange]\label{eulero}
If $y\in D$ is an extremal (maximum or minimum) for the action functional $J:D\rightarrow\mathbb{R}$ given as 
in Definition \ref{deflag}, then $y$ must 
satisfy the Euler-Lagrange equations
\beq
D_2 L(x,y(x),y'(x))-\frac{\mathrm{d}}{\mathrm{d}x} D_3 L(x,y(x),y'(x))=0 \,.
\label{eq:EulerLagrange}
\eeq
\end{thm}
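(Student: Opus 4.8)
The plan is to feed the necessary condition into the explicit formula for the first variation and then strip off the test function using the two auxiliary lemmas. First I would apply part (1) of Theorem \ref{teoremaco}: since $y$ is a local extremal and $J$ is G\^ateaux differentiable by the preceding Proposition, the first variation must vanish, $\delta J(y,v)=0$ for every admissible direction $v$. In the fixed-endpoint problem the admissible directions are exactly those $v\in\mathcal{C}^1_0([a,b])$ that preserve the boundary values of $y$, and these do lie in $E=\mathcal{C}^1([a,b])$. Substituting formula \eqref{eqn1} then yields
$$
\int_a^b\bigl(v(x)\,D_2L(x,y(x),y'(x))+v'(x)\,D_3L(x,y(x),y'(x))\bigr)\,\mathrm{d}x=0
$$
for all $v\in\mathcal{C}^1_0([a,b])$.

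Next I would abbreviate $f(x):=D_2L(x,y(x),y'(x))$ and $g(x):=D_3L(x,y(x),y'(x))$, both continuous on $[a,b]$ because $L\in\mathcal{C}^2(U)$ makes $D_2L,D_3L$ continuous and $y,y'$ are continuous on the compact interval. The vanishing integral is then precisely the hypothesis of the DuBois-Reymond Lemma \ref{lema2}, applied with this $f$ and $g$; its conclusion promotes the integral identity to the pointwise differential relation $\tfrac{\mathrm{d}}{\mathrm{d}x}D_3L=D_2L$, which is exactly \eqref{eq:EulerLagrange}. More in the spirit of the phrase ``a simple integration by parts'', I would set $F(x):=\int_a^x f(s)\,\mathrm{d}s$, so that $F\in\mathcal{C}^1$ with $F'=f$ and $F(a)=0$; integrating the first term by parts and using $v(a)=v(b)=0$ turns the identity into $\int_a^b v'(x)\,(g(x)-F(x))\,\mathrm{d}x=0$ for every $v\in\mathcal{C}^1_0([a,b])$, whence $g-F$ is forced to be constant. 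Then $g=F+\mathrm{const}$ is of class $\mathcal{C}^1$ and $g'=F'=f$, recovering \eqref{eq:EulerLagrange}. If one were willing to assume the stronger regularity $y\in\mathcal{C}^2$, the term $\int v'g$ could be integrated by parts directly and Lagrange's Lemma \ref{lema1} applied to the continuous factor $f-g'$ against arbitrary $v\in\mathcal{C}_0$.

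The hard part is the regularity subtlety concealed in the words ``integration by parts''. The extremal $y$ is only assumed to be $\mathcal{C}^1$, so $g(x)=D_3L(x,y,y')$ is not known in advance to be differentiable, and one cannot naively integrate the second term by parts. The entire reason for routing the argument through DuBois-Reymond rather than through Lagrange's lemma is to \emph{deduce} the differentiability of $g$ (and hence the regularity of the extremal) as a conclusion instead of assuming it; I would take care to present the lemma in the form that delivers this $\mathcal{C}^1$ regularity, and to track signs so that the final relation reads $D_2L-\tfrac{\mathrm{d}}{\mathrm{d}x}D_3L=0$ rather than its negative.
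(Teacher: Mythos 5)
Your proposal is correct and takes essentially the same route as the paper, whose entire proof is the one-line remark that an integration by parts in \eqref{eqn1} together with Lemmas \ref{lema1}, \ref{lema2} and Theorem \ref{teoremaco} gives the result; your write-up supplies the details the paper omits, in particular the regularity point that $g(x)=D_3L(x,y(x),y'(x))$ is a priori only continuous (note the paper's Lemma \ref{lema2} even \emph{assumes} $g\in\mathcal{C}^1([a,b])$, whereas your route through $F(x)=\int_a^x f(s)\,\mathrm{d}s$ and the constancy of $g-F$ derives that regularity instead of postulating it). Your insistence on tracking signs is also warranted: Lemma \ref{lema2} as printed concludes $g'=-f$, which applied literally with $f=D_2L$ and $g=D_3L$ would yield $\frac{\mathrm{d}}{\mathrm{d}x}D_3L(x,y,y')=-D_2L(x,y,y')$; the correct conclusion, exactly as your construction shows, is $g'=f$, which is \eqref{eq:EulerLagrange}.
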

\begin{rem}
In physics literature, it is common to commit a slight abuse of notation and to write the Euler-Lagrange equations
in the form
$$
\dfrac{\partial }{\partial y}L(x,y(x),y'(x))-\frac{\mathrm{d}}{\mathrm{d}x}\dfrac{\partial}{\partial y'}L(x,y(x),y'(x))=0 \,.
$$
\end{rem}

Note that, for the case at hand, writing $L(s,y(s)+tv(s),y'(s)+tv'(s))=L(s)$ for simplicity:
$$
j''_{(y,v)} (t)=\int^b_a (v^2 (s)D_{22}L(s)+2v(s)v'(s)D_{23}L(s)+(v')^2(s)D_{33}L(s))\mathrm{d}s,
$$
so, evaluating at $t=0$,
\beq
&\delta^2 J(y,v)= \int^b_a (v^2 (s)D_{22}L(s,y(s),y'(s)) \label{variacion2} \\
&+2v(s)v'(s)D_{23}L(s,y(s),y'(s))+(v')^2(s)D_{33}L(s,y(s),y'(s)))\mathrm{d}s .\notag
\eeq
It is now a routine computation (continuity arguments and Schwarz inequality) to prove that for an action
functional $J(y)=\int^b_a L(x,y,y')\mathrm{d}x$ with $L\in \mathcal{C}^2 (U)$ such that its second partial derivatives are bounded on $U$, under the hypothesis \eqref{item1} and \eqref{item2} of Theorem \ref{teoremaco}, the condition \eqref{item3} is satisfied (\cite{giaquinta}, pg. 224). Thus, a path $y_0 \in D$ is a minimum if it satisfies Euler-Lagrange's equations \eqref{eq:EulerLagrange} \emph{and} the second G\^ateaux differential at $y_0$ is coercive, that is, there exists a $c>0$ such that, for all $v\in E$:
\beq
\delta^2 J(y_0 ,v)\geq c||v||^2 .
\label{coercive}
\eeq

\subsection{Problems with constraints}\label{sec:constraints}
The calculus of variation is frequently applied when there are constraints. The problem can be reduced to
that of extremizing a single functional constructed out from the original one and the constraints.
\begin{defn}
Let $J$ be a functional defined on a Banach space $(X,\parallel \cdot \parallel )$.
We say that $\delta J$ is weakly continuous at $y\in X$ if:
\begin{enumerate}[(a)]
\item The domain of $J$ contains an open neighborhood $D\ni y$, and, for each $h\in X$, the variation $\delta J(y,h)$ is defined.
\item $
\lim_{z\to y}\delta J(z,h)=\delta J(y,h).
$
\end{enumerate}
If there exists an $r>0$ such that $\delta J$ is weakly continuous for every $z\in B(y;r)$, we say that $\delta J$
is locally weakly continuous at $y$, or simply weakly continuous near $y$.
\end{defn}

When we have an open subset $U=]a,b[\times \mathbb{R}\times\mathbb{R}\subset\mathbb{R}^3$ and an
action functional $J:D_U \to \mathbb{R}$, $J(y)=\int^b_a L(x,y,y')\mathrm{d}x$, it is easy to see that
imposing some mild conditions on the Lagrangian $L\in\mathcal{C}^2 (U)$ we obtain a weakly continuous
functional. For instance, it is enough to require that the second partial derivatives of $L$ be bounded on $U$,
or that its first partial derivatives be uniformly continuous. For most of the actions appearing in physics,
however, it is usually easier to prove the weak continuity directly from the definition (cfr. Example \ref{entropia}).

Let now $J=:K_0,K_1,...,K_r$ be functionals defined on $D$, all of them G\^ateaux differentiables at each point $y\in D$.
We will assume that the set
$$
S=\{ y\in D: K_i (y)=k_i ,1\leq i\leq r \}
$$
is not void, and that $y_0$ is an interior point of $S$ such that $J|_S$ has a local extremal at $y_0$.
\begin{prop}\label{multipliers}
Let $\delta K_j$ be weakly continuous near $y_0$, $0\leq j\leq r$. Then, for any $h=:h_0,h_1,...,h_r \in X$ we have:
$$
\det (\delta K_j(y_0,h_m))=0,\quad 0\leq j,m\leq r.
$$
\end{prop}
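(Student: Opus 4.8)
The plan is to read the vanishing of $\det(\delta K_j(y_0,h_m))$ as the statement that a certain $(r+1)$-dimensional reduced Jacobian at $y_0$ is singular, and to obtain this by contradiction from the extremality of $J|_S$ via the finite-dimensional inverse function theorem. (Conceptually, singularity for \emph{all} choices of directions is the same as linear dependence of the functionals $\delta K_0(y_0,\cdot),\dots,\delta K_r(y_0,\cdot)$, which is the existence of Lagrange multipliers; but I would argue the determinant directly.) So I would fix arbitrary directions $h_0,h_1,\dots,h_r\in X$ and suppose, toward a contradiction, that the $(r+1)\times(r+1)$ matrix $A=(\delta K_j(y_0,h_m))_{0\le j,m\le r}$ is \emph{invertible}; the aim is then to produce points of $S$ arbitrarily close to $y_0$ on which $J$ takes values both strictly above and strictly below $J(y_0)$.

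To this end I would introduce the finite-dimensional map
$$
F(t)=\big(K_0(y_t),K_1(y_t),\dots,K_r(y_t)\big),\qquad y_t:=y_0+\sum_{m=0}^r t_m h_m,
$$
defined for $t=(t_0,\dots,t_r)$ in a small ball about $0\in\mathbb{R}^{r+1}$, small enough that $y_t$ stays in $D$ and in the region where the $\delta K_j$ are weakly continuous (possible since $D$ is open and $y_t\to y_0$ as $t\to 0$). Note that $F(0)=(J(y_0),k_1,\dots,k_r)$, and a direct computation from the definition of the G\^ateaux derivative gives the partial derivatives $\partial F_j/\partial t_m\,(t)=\delta K_j(y_t,h_m)$.

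The key step, and the point where the weak-continuity hypothesis is indispensable, is to upgrade this to the statement that $F$ is of class $\mathcal{C}^1$ in the ordinary finite-dimensional sense, not merely that its partials exist. Indeed, as $t'\to t$ one has $\|y_{t'}-y_t\|\le\sum_m|t'_m-t_m|\,\|h_m\|\to 0$, so weak continuity of $\delta K_j$ near $y_0$ yields $\delta K_j(y_{t'},h_m)\to\delta K_j(y_t,h_m)$; hence each partial $\partial F_j/\partial t_m$ is continuous, and the standard theorem that continuous partials imply $\mathcal{C}^1$ applies, with Jacobian $DF(0)=A$ at the origin. I expect this to be the main obstacle, because G\^ateaux differentiability by itself only produces directional derivatives and would not license the inverse function theorem; the weak continuity is precisely what converts the pointwise directional data into a genuine $\mathcal{C}^1$ map.

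Finally I would invoke the inverse function theorem: since $DF(0)=A$ is assumed invertible, $F$ is a local diffeomorphism carrying a neighborhood of $0$ onto a full neighborhood $W$ of $F(0)$ in $\mathbb{R}^{r+1}$. In particular $W$ contains the points $(J(y_0)\pm\varepsilon,k_1,\dots,k_r)$ for all small $\varepsilon>0$; pulling these back by $F^{-1}$ produces parameters $t$ near $0$, hence $y_t$ near $y_0$, whose last $r$ constraint values equal $k_1,\dots,k_r$ (so $y_t\in S$) while $J(y_t)=J(y_0)\pm\varepsilon$. For $\varepsilon$ small enough these $y_t$ lie in the convex neighborhood on which $y_0$ is a local extremum of $J|_S$, yet $J$ exceeds $J(y_0)$ for one sign and is smaller for the other, a contradiction. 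Therefore $A$ cannot be invertible, i.e. $\det(\delta K_j(y_0,h_m))=0$; as the directions $h_0,\dots,h_r$ were arbitrary, the proposition follows.
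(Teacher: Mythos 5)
Your proposal is correct and follows essentially the same route as the paper: both argue by contradiction, build the finite-dimensional map $F(t)=\bigl(K_0(y_t),\dots,K_r(y_t)\bigr)$ along the span of the directions $h_0,\dots,h_r$, use weak continuity to conclude $F\in\mathcal{C}^1$ with Jacobian $\bigl(\delta K_j(y_0,h_m)\bigr)$ at the origin, and invoke the inverse function theorem to produce points of $S$ arbitrarily close to $y_0$ with $J$-values on both sides of $J(y_0)$, contradicting local extremality. Your explicit justification of the $\mathcal{C}^1$ step (continuity of the partials via $\|y_{t'}-y_t\|\le\sum_m|t'_m-t_m|\,\|h_m\|$) is in fact slightly more careful than the paper's, which only asserts it after ``shrinking $G$ if necessary.''
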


\begin{proof}
By reduction to the absurd. Let us assume that there exist $h,h_1,...,h_r$ such that the determinant is non zero.
As $y_0$ has an open neighborhood $D$, there exist a set of scalars $\alpha ,\beta_1 ,...,\beta_r$ such that
$y_0 +\alpha h+\beta_1 h_1 +\cdots \beta_r h_r \in D$, and the variations $\delta J,\delta K_1,...,\delta K_r$ are continuous at $y_0 +\alpha h+\beta_1 h_1 +\cdots \beta_r h_r$. Let us define the function $F:\mathbb{R}^{r+1}\to \mathbb{R}^{r+1}$, on a neighborhood $G$ of the origin $(0,...,0)\in \mathbb{R}^{r+1}$, by
$$
F_{p+1} (\alpha ,\beta_1 ,...,\beta_r )=K_p (y_0 +\alpha h+\beta_1 h_1 +\cdots \beta_r h_r),
$$
for $0\leq p\leq r$ (remember $K_0 =J$ and $h_0 =h$). It is immediate that
$$
D_{q+1} F_{p+1} (\alpha ,\beta_1 ,...,\beta_r )=\delta K_p (y_0 +\alpha h+\beta_1 h_1 +\cdots \beta_r h_r ,h_q),
$$
for $0\leq q\leq r$. Now, as $\delta J,\delta K_1,...,\delta K_r$ are continuous near $y_0$, by shrinking $G$ if necessary we
can assume $F\in \mathcal{C}^1 (G)$, with Jacobian at the origin:
$$
\det (\delta K_j(y_0,h_m))\neq 0,\quad 0\leq j,m\leq r
$$
by hypothesis. Applying to $F$ the inverse function theorem in the neighborhood of the origin, we get that there exists
an open subset $V\subset \mathbb{R}^{r+1}$, containing $F(0,...,0)=(J(y_0),k_1 ,...,k_r )$, and a local diffeomorphism
$\varphi :V\to G$ such that $\tilde{G}=\varphi (V)\subset G$ is an open neighborhood of the origin in $\mathbb{R}^{r+1}$,
and for all $(x,y_1,...,y_r)\in V$:
$$
(x,y_1,...,y_r)=F(\varphi (x,y_1,...,y_r)).
$$
In particular, $\varphi (J(y_0),k_1,...,k_r)=(0,0,...,0)$. As $(J(y_0),k_1,...,k_r)$ is a point of the open set $V\subset \mathbb{R}^{r+1}$, we can find in $V$ two different points $(x_1,k_1,...,k_r)$ and $(x_2,k_1,...,k_r)$ such that $x_1 <J(y_0)<x_2$. Their corresponding images by $\varphi$ are $(\alpha^1 ,\beta^1_1,...,\beta^1_r)=\varphi (x_1,k_1,...,k_r)$ and
$(\alpha^2 ,\beta^2_1,...,\beta^2_r)=\varphi (x_2,k_1,...,k_r)$. Thus, the vectors $u=y_0 +\alpha^1 h+\beta^1_1 h_1+\cdots +\beta^1_r h_r$ and $v=y_0 +\alpha^2 h+\beta^2_1 h_1+\cdots +\beta^2_r h_r$ belong to $D$. Moreover,
\begin{align*}
&F(\alpha^1 ,\beta^1_1,...,\beta^1_r)=(x_1,k_1,...,k_r), \\
&F(\alpha^2 ,\beta^2_1,...,\beta^2_r)=(x_2,k_1,...,k_r),
\end{align*}
so, equating components:
\begin{align*}
&J(u)=K_0 (u)= F_0 (\alpha^1 ,\beta^1_1,...,\beta^1_r)=x_1,\\
&K_p (u)=F_p (u)=F_p (\alpha^1 ,\beta^1_1,...,\beta^1_r)=k_p,
\end{align*}
for $1\leq p\leq r$, and
\begin{align*}
&J(v)=K_0 (v)= F_0 (\alpha^2 ,\beta^2_1,...,\beta^2_r)=x_2,\\
&K_p (v)=F_p (v)=F_p (\alpha^2 ,\beta^2_1,...,\beta^2_r)=k_p,
\end{align*}
so $u,v\in S$ too. But, because of our choices, $J(u)=x_1<J(y_0)<x_2=J(v)$, and this construction can be repeated for
$x_1,x_2$ arbitrarily close to $J(y_0)$, so (by the continuity of $\varphi$) the corresponding vectors $u,v$ will be
arbitrarily close to $y_0$, contradicting the assumption that $y_0$ is a local extremal of $J|_S$.
\end{proof}

\begin{thm}\label{th:multipliers}
Let $\delta K_j$ be weakly continuous near $y_0$. Then, either
$$
\det (\delta K_i(y_0,h_l))=0,\quad 1\leq i,l\leq r,
$$
or there exist a set of real numbers (the Lagrange multipliers) $\lambda_1,...,\lambda_r$ such that
$$
\delta J(y_0 ,h)=\sum^m_{i=1}\lambda_i \delta K_i (y_0,h).
$$
\end{thm}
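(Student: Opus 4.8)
The plan is to obtain the theorem as an immediate consequence of Proposition \ref{multipliers} via a single cofactor expansion, reading the stated dichotomy as the exhaustive split between ``the constraint minor vanishes for every choice of auxiliary directions'' and ``it can be made nonzero for some choice.'' By Proposition \ref{multipliers}, since each $\delta K_j$ is weakly continuous near $y_0$ for $0\le j\le r$, for \emph{every} choice of directions $h=h_0,h_1,\dots,h_r\in X$ the full $(r+1)\times(r+1)$ determinant $\det(\delta K_j(y_0,h_m))$, $0\le j,m\le r$, vanishes. This identity, valid for all $h$, is the entire analytic input; everything that follows is linear algebra.

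First I would dispose of the first alternative: if for every choice of $h_1,\dots,h_r$ the $r\times r$ constraint determinant $\det(\delta K_i(y_0,h_l))$, $1\le i,l\le r$, is zero, there is nothing to prove. So assume the contrary and fix, once and for all, directions $h_1,\dots,h_r$ for which this determinant is nonzero; call its value $A_0$. Observe that $A_0$ is precisely the minor of the full matrix obtained by deleting the row indexed by $K_0=J$ together with the column indexed by the still-free direction $h_0=h$.

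Next I would expand the vanishing full determinant along the column associated with the free direction $h$. Writing $A_0,A_1,\dots,A_r$ for the minors obtained by deleting that column together with the row indexed by $K_0,K_1,\dots,K_r$ respectively, the expansion gives
\[
\sum_{j=0}^r (-1)^j A_j\,\delta K_j(y_0,h)=0
\]
for every $h\in X$. The crucial point is that each $A_j$ is computed from the \emph{fixed} columns $h_1,\dots,h_r$ alone, hence is a constant independent of $h$. Isolating the $j=0$ term and dividing by $A_0\neq 0$ yields
\[
\delta J(y_0,h)=\sum_{i=1}^r \lambda_i\,\delta K_i(y_0,h),\qquad \lambda_i:=\frac{(-1)^{i+1}A_i}{A_0},
\]
which is exactly the asserted relation, with the $\lambda_i$ serving as the Lagrange multipliers.

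The argument is essentially obstacle-free once Proposition \ref{multipliers} is available; the only point demanding genuine care is verifying that the coefficients $\lambda_i$ do not depend on $h$. This is secured by freezing the auxiliary directions $h_1,\dots,h_r$ \emph{before} letting $h$ vary, so that every cofactor entering $\lambda_i$ is evaluated on the fixed data $(y_0,h_1,\dots,h_r)$ only. One should also confirm that the two alternatives are exhaustive, which they are by construction: we land in the second alternative precisely when some admissible tuple $h_1,\dots,h_r$ renders the constraint minor nonsingular.
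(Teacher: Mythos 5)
Your proof is correct and follows essentially the same route as the paper's: both invoke Proposition \ref{multipliers} to force the full $(r+1)\times(r+1)$ determinant to vanish, then expand it along the column of the free direction $h$ and divide by the nonzero $r\times r$ constraint minor to extract the multipliers. If anything, your version states the bookkeeping more carefully, since you fix the auxiliary directions $h_1,\dots,h_r$ (with $A_0\neq 0$) \emph{before} letting $h$ vary, making it explicit that the cofactors, and hence the $\lambda_i$, are independent of $h$ --- a point the paper's proof handles somewhat loosely when it introduces the vectors $v_1,\dots,v_r$ mid-argument.
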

\begin{proof}
By Proposition \ref{multipliers}, $\det (\delta K_j(y_0,h_j))=0,\quad 0\leq j\leq r$; developing by the first column (where $K_0 =J$ and $h_0 =h$), we get
\beq\label{aux1}
\delta J(y_0 ,h)\det (\delta K_i (y_0,h_l))+\sum^m_{i=1}\mu_i \delta K_i (y_0,h)=0,
\eeq
for some set of scalars $\mu_1,...,\mu_r$ (which depend on the vectors $(h_1,...,h_r)$).\\
Then, either 
$$
\det (\delta K_i(y_0,h_l))=0,\quad 1\leq i,l\leq r,
$$
or there exist some set of vectors $v_1,...,v_r\in X$ such that $\det (\delta K_i(y_0,v_l))\neq 0$ In this case,
substituting in \eqref{aux1}, we obtain
$$
\delta J(y_0,h)=\sum^m_{i=1}\lambda_i \delta K_i (y_0 ,h),
$$
where
$$
\lambda_i =-\frac{\mu_i}{\det (\delta K_i (y_0,h_l))}.
$$
\end{proof}

\section{Conjugate points}
Let us rewrite the conditions for a minimum of the action $J$, found in the previous section,
in terms of a differential equation involving the derivatives of the Lagrangian $L$.
\begin{prop}
Let the action $J$ be given as in Definition~\ref{deflag},
and let $y_0 \in D$.  
Then, the second variation of $J$ at $y_0$, in the direction of a $v\in\mathcal{C}^1_0 ([a,b])$,
$\delta^2 J(y_0,v)$, reads
\beq
\delta^2 J(y_0,v)=\frac{1}{2}\int^b_a \left( Pv'^{\, 2} +Qv^2 \right) \, \mathrm{d}x \,,
\label{eq:2ndvariation}
\eeq
where the functions $P(x)$ and $Q(x)$ are explicitly given by
\beq
P(x) &=& D_{33}L(x,y(x),y'(x)) \label{eq:Pfunc} \\
Q(x) &=& D_{22}L(x,y(x),y'(x))-\frac{\mathrm{d}}{\mathrm{d}x}D_{23}L(x,y(x),y'(x))\label{eq:Qfunc}
\eeq
\end{prop}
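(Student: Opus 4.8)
The plan is to start from the expression \eqref{variacion2} already computed for the second variation,
$$
\delta^2 J(y_0,v)=\int^b_a \left(v^2 D_{22}L+2vv'D_{23}L+(v')^2 D_{33}L\right)\mathrm{d}x,
$$
where throughout $D_{ij}L$ abbreviates $D_{ij}L(x,y_0(x),y_0'(x))$, and to reorganize this three-term integrand into the claimed two-term form. The two pure squares already match the target: the term $(v')^2 D_{33}L$ supplies precisely the coefficient $P=D_{33}L$ of \eqref{eq:Pfunc}, and $v^2 D_{22}L$ is the first half of the coefficient $Q$ in \eqref{eq:Qfunc}. Thus the entire content of the statement is concentrated in the mixed term $2vv'D_{23}L$, which I would rewrite so that it contributes only to the $v^2$ coefficient.

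Next I would invoke the elementary identity $2vv'=(v^2)'$ and integrate by parts. Writing $\phi(x)=D_{23}L(x,y_0(x),y_0'(x))$, this gives
$$
\int^b_a (v^2)'\,\phi\,\mathrm{d}x=\bigl[v^2\phi\bigr]^b_a-\int^b_a v^2\,\phi'\,\mathrm{d}x.
$$
Because $v\in\mathcal{C}^1_0([a,b])$ we have $v(a)=v(b)=0$, so the boundary term drops out and the cross term becomes $-\int^b_a v^2\,\frac{\mathrm{d}}{\mathrm{d}x}D_{23}L\,\mathrm{d}x$. Combining this with $\int^b_a v^2 D_{22}L\,\mathrm{d}x$ yields exactly $\int^b_a v^2\bigl(D_{22}L-\frac{\mathrm{d}}{\mathrm{d}x}D_{23}L\bigr)\mathrm{d}x=\int^b_a Qv^2\,\mathrm{d}x$, which together with the $P$ term produces the stated two-term form, with $P$ and $Q$ identified as in \eqref{eq:Pfunc}--\eqref{eq:Qfunc}.

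The step that requires care, and the only genuine obstacle, is the legitimacy of that integration by parts, namely the existence of $\phi'=\frac{\mathrm{d}}{\mathrm{d}x}D_{23}L$. Under the standing hypothesis $L\in\mathcal{C}^2(U)$ the function $\phi$ is merely continuous, since differentiating it in $x$ through the chain rule calls for $y_0''$ and for third derivatives of $L$. I would therefore either strengthen the regularity assumptions (for instance $L\in\mathcal{C}^3(U)$ together with $y_0\in\mathcal{C}^2([a,b])$, which makes $\phi\in\mathcal{C}^1$ and renders $Q$ a bona fide continuous function), or restrict attention to the extremals of interest, for which the du Bois--Reymond lemma (Lemma \ref{lema2}) together with a nonvanishing Legendre term $D_{33}L$ upgrades $y_0$ to class $\mathcal{C}^2$ automatically. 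In either case $\phi'$ exists, $Q$ is well defined pointwise, and the boundary-free integration by parts above is fully justified; I would note in passing that without such regularity the mixed term can still be treated in a distributional sense, but the clean pointwise formula \eqref{eq:Qfunc} for $Q$ does rely on $\phi\in\mathcal{C}^1$.
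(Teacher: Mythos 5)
Your proof is correct and is precisely the paper's own argument: integrate the mixed term $2vv'D_{23}L$ in \eqref{variacion2} by parts using $2vv'=(v^2)'$, with the boundary term annihilated by $v(a)=0=v(b)$. Your added remark on the regularity needed for $\frac{\mathrm{d}}{\mathrm{d}x}D_{23}L$ to exist addresses a point the paper silently glosses over (as does your implicit derivation without the factor $\frac{1}{2}$, which in \eqref{eq:2ndvariation} is inconsistent with \eqref{variacion2} and with the paper's own later use of the formula).
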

\begin{proof}
Just make an integration by parts in the middle term of the integrand in \eqref{variacion2}, taking
into account the boundary conditions on $v\in \mathcal{C}^1_0 ([a,b])$.
\end{proof}
\begin{rem}
The notation used in physics is:
\begin{align*}
P &= \frac{\partial^2 L}{\partial (y')^2} \,,\\
Q &= \frac{\partial^2 L}{\partial y^2}-\frac{\mathrm{d}}{\mathrm{d}x}\left( \frac{\partial^2 L}{\partial y\partial y'} \right) \,.
\end{align*}
\end{rem}

Lagrange considered equation \eqref{eq:2ndvariation} already in $1786$. He thought that a sufficient condition to have a minimum
would be the positivity of the second variation $\delta^2 J(y_0,v)>0$ (which is \emph{not} true: coercivity is needed), so
he tried to ``complete the square'' in~\eqref{eq:2ndvariation} by introducing a 
boundary term of the form $gv^2/2$, where 
$g\in\mathcal{C}([a,b])$ is to be determined.  
In this way, we have
\beq
\hspace{-4ex}
\delta^2 J(y_0 ,v) &=& \int^b_a \left( P(v')^2 +Qv^2 \right)\, \mathrm{d}x+\int^b_a \frac{\mathrm{d}}{\mathrm{d}x}(gv^2 )\, dx  \nn\\
&=& \int^b_ a \left( P(v')^2 +2gvv'+(g'+Q)v^2 \right) \, \mathrm{d}x \\
&=& \int^b_ a P\left( v'+\frac{g}{P}v \right)^2 \, \mathrm{d}x+\int^b_a \left(g'+Q-\frac{g^2}{P} \right)v^2 \, \mathrm{d}x   \,.\nn
\label{desarrollolag}
\eeq
Thus, it is straightforward that $\delta^2 J(y_0,v)$ 
will be positive definite if the following conditions are satisfied
\beq
&&P=D_{33}L(x,y(x),y'(x))> 0 \,,
\label{CondLegendre1}      \\
&&g'+Q-\frac{g^2}{P}=0 \mbox{ has a solution }g \,.
\label{CondLegendre2}  
\eeq
Condition~\eqref{CondLegendre1} is known as 
the Legendre condition.  Also, note that equation~\eqref{CondLegendre2} is
of Riccati type. This equation is basic to determine the extremality properties of critical points of the action.

\begin{defn}
Let $J$ be an action functional, 
and let $f\in\mathcal{C}([a,b])$. The differential equation for $f$
\beq
\label{eq:Jacobi}
-\frac{\mathrm{d}}{\mathrm{d}x}\left( P\frac{\mathrm{d}f}{\mathrm{d}x} \right) +Qf=0 \,,
\eeq
where $P$ and $Q$ are given in~\eqref{eq:Pfunc} and~\eqref{eq:Qfunc},
respectively, is called the Jacobi equation.
\end{defn}
Notice that Jacobi equation is simply obtained by introducing
the change of variable $g=-P\mathrm{d}(\ln f)/\mathrm{d}x$ in 
equation~\eqref{CondLegendre2}, which renders it linear. Once we solve the equation for $f$, we get $g$ and then we can assure that, if $P>0$, then $\delta^2 J(y_0 ,v)>0$. Although we know that this is not enough to guarantee a minimum
(recall Remark \ref{localmin}), the properties of
the solutions to the Jacobi equation \eqref{eq:Jacobi} will lead us to an equivalent condition for a minimum
of the action, given in terms of quantities determined by the Lagrangian.

\begin{defn}
Two points $p,q \in \mathbb{R}$ (with $p<q$) are called 
conjugate with  
respect to the Jacobi equation~\eqref{eq:Jacobi} 
if there is a solution $f\in\mathcal{C}^2([a,b])$ of
\eqref{eq:Jacobi} such that $\left. f\right|_{]p,q[}\neq 0$ and $f(p)=0=f(q)$.
\end{defn}

The following result is just a particular case of K. Friedrichs' inequalities for the one-dimensional case
(see \cite{adams}). Its proof can also be done directly, as an application of the Schwarz inequality.
\begin{lem}\label{lema4}
For any $v\in\mathcal{C}^1_0 ([a,b])$, we have
$$
\int^b_a v^2 (x)\mathrm{d}x \leq \frac{(b-a)^2}{2}\int^b_a (v')^2 (x)\mathrm{d}x 
$$
\end{lem}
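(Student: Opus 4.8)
The plan is to exploit the boundary condition $v(a)=0$ together with the fundamental theorem of calculus and a single application of the Schwarz inequality. Since $v\in\mathcal{C}^1_0([a,b])$, in particular $v(a)=0$, so for every $x\in[a,b]$ I can write $v(x)=\int_a^x v'(t)\,\mathrm{d}t$.

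First I would estimate $v(x)^2$ pointwise. Viewing the integrand as the product $v'(t)\cdot 1$ and applying Schwarz's inequality on the interval $[a,x]$ gives
$$
v(x)^2=\left(\int_a^x v'(t)\,\mathrm{d}t\right)^2\leq (x-a)\int_a^x (v'(t))^2\,\mathrm{d}t\leq (x-a)\int_a^b (v'(t))^2\,\mathrm{d}t,
$$
where in the last step I enlarge the domain of integration using that $(v')^2\geq 0$.

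Then I would integrate this bound over $x\in[a,b]$. The right-hand factor $\int_a^b (v')^2\,\mathrm{d}t$ is a constant with respect to $x$, so it pulls out of the outer integral, and the remaining elementary integral produces exactly the claimed constant:
$$
\int_a^b v(x)^2\,\mathrm{d}x\leq \left(\int_a^b (x-a)\,\mathrm{d}x\right)\int_a^b (v'(t))^2\,\mathrm{d}t=\frac{(b-a)^2}{2}\int_a^b (v')^2\,\mathrm{d}x.
$$

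There is no serious obstacle here; the only point requiring a little care is to confirm that $\int_a^b (x-a)\,\mathrm{d}x$ yields precisely $\frac{(b-a)^2}{2}$, and hence the stated factor. I note that the second boundary condition $v(b)=0$ is not even needed for this particular constant: using it as well---say by splitting $[a,b]$ at its midpoint and applying the one-sided estimate on each half---would only sharpen the constant, but the simple one-sided argument already delivers the inequality exactly as stated.
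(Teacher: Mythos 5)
Your proof is correct, and it is precisely the argument the paper has in mind: the paper gives no written-out proof, stating only that the lemma is a special case of Friedrichs' inequalities and that ``its proof can also be done directly, as an application of the Schwarz inequality,'' which is exactly your fundamental-theorem-of-calculus plus Schwarz argument. Your side remark that $v(b)=0$ is not needed for this constant is also accurate.
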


\begin{thm}
Let $J(y)$ be an action functional as in Definition~\ref{deflag}.
Sufficient conditions for a critical point $y_0$ of 
$J(y)$ to be a local minimum in the interval 
$[a,b]$ 
are given by
\begin{enumerate}[(a)]
\item For all $x \in [a,b]$
\beq
\nn
P(x)=D_{33}L(x,y(x),y'(x))>0  \,.
\eeq
\item The interval $[a,b]$ does not contain conjugate points 
at $x=a$ with respect to Jacobi equation~\eqref{eq:Jacobi}.
\end{enumerate}
\end{thm}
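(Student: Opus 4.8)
The plan is to reduce the statement to the coercivity criterion~\eqref{coercive} established in Section~\ref{locext}: it suffices to produce a constant $c>0$ such that $\delta^2 J(y_0,v)\geq c\lVert v\rVert^2$ for every admissible variation $v\in\mathcal{C}^1_0([a,b])$, because the weak continuity needed in Theorem~\ref{teoremaco} already follows, under the boundedness conditions on the second partials of $L$ discussed right after~\eqref{coercive}, from a Schwarz-inequality argument. The only available route to such a lower bound is Lagrange's completion of the square~\eqref{desarrollolag}, which is legitimate as soon as the Riccati equation~\eqref{CondLegendre2} admits a solution $g\in\mathcal{C}^1([a,b])$. So the whole problem is to manufacture such a $g$ out of hypotheses (a) and (b), and then to pass from positivity to genuine coercivity. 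Throughout, $P>0$ on the compact $[a,b]$ gives $\min_{[a,b]}P\geq m>0$, which I will use repeatedly.

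First I would convert the absence of conjugate points into the existence of a nowhere-vanishing solution of the Jacobi equation~\eqref{eq:Jacobi} on $[a,b]$. Let $u$ solve~\eqref{eq:Jacobi} with $u(a)=0$, $u'(a)=1$; any solution vanishing at $a$ is a multiple of $u$, so a point $q\in(a,b]$ is conjugate to $a$ precisely when it is the first zero of $u$ after $a$. Hypothesis (b) therefore says $u$ has no zero in $(a,b]$, i.e.\ (after fixing a sign) $u>0$ on $(a,b]$; the trouble is only the forced zero at the left endpoint. To remove it I would extend $P,Q$ continuously to $[a-\eta,b]$, keeping $P>0$, and invoke continuous dependence of solutions on the initial point: the solution $w_s$ with $w_s(s)=0$, $w_s'(s)=1$ depends continuously on $s$ in $\mathcal{C}^1[a,b]$, and its first zero to the right of $s$ moves continuously with $s$. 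Since that first zero exceeds $b$ for $s=a$, it still does for $s=a-\eta$ with $\eta$ small, so $w:=w_{a-\eta}$ satisfies $w(x)>0$ for all $x\in[a,b]$. As $(Pw')'=Qw$ is continuous and $P>0$, one has $Pw'\in\mathcal{C}^1$ and hence $w\in\mathcal{C}^1$, which is all the regularity the next steps require.

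With $w>0$ in hand I would set $g:=-P\,w'/w$, which is exactly the substitution noted after~\eqref{eq:Jacobi}; a one-line computation using $(Pw')'=Qw$ gives $g'+Q-g^2/P = -Q + P(w'/w)^2 + Q - (w'/w)^2 P = 0$, so $g$ solves~\eqref{CondLegendre2} and $g\in\mathcal{C}^1([a,b])$. Feeding this $g$ into~\eqref{desarrollolag} and using $v(a)=v(b)=0$ annihilates the term $\int(g'+Q-g^2/P)v^2$, and since $v'+\tfrac{g}{P}v = v'-\tfrac{w'}{w}v = w\,(v/w)'$ we are left with
\[
\delta^2 J(y_0,v)=\tfrac12\int_a^b P\Bigl(v'+\tfrac{g}{P}\,v\Bigr)^2\mathrm{d}x
=\tfrac12\int_a^b P\,w^2\bigl((v/w)'\bigr)^2\,\mathrm{d}x .
\]
This already gives positivity, but, as Remark~\ref{localmin} warns, positivity is not yet coercivity.

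The last and most delicate step is to upgrade this to a coercive bound, and this is exactly where Lemma~\ref{lema4} (Friedrichs) enters. Writing $\phi:=v/w$, one has $\phi\in\mathcal{C}^1_0([a,b])$, and since $P,w$ are continuous and strictly positive on $[a,b]$ there is $\kappa>0$ with $Pw^2\geq\kappa$, whence $\delta^2 J(y_0,v)\geq\tfrac{\kappa}{2}\int_a^b(\phi')^2\,\mathrm{d}x$. Applying Lemma~\ref{lema4} to $\phi$ turns this derivative seminorm into control of the full quantity $\int\phi^2+\int(\phi')^2$, and because $w,w'$ and $1/w$ are bounded on $[a,b]$ the correspondence $v=w\phi$ is a norm equivalence between $v$ and $\phi$ in that (Sobolev) norm; combining the estimates yields $\delta^2 J(y_0,v)\geq c\lVert v\rVert^2$ with the norm controlling both $v$ and $v'$ in the mean, as required by~\eqref{coercive}. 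I expect the main obstacle to be the first step, manufacturing the positive solution $w$ from hypothesis (b): it is there that one must justify continuity of the first conjugate point under perturbation of the base point (equivalently, a Sturm-type continuation), whereas the completion of the square and the Friedrichs estimate are routine once $w$ is available.
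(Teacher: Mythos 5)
Your proposal is correct (at the same level of rigor as the paper's own argument), but the mechanism by which you manufacture coercivity is genuinely different. Both proofs share the same skeleton: complete the square in $\delta^2 J$ using a Riccati solution $g$ built from a nowhere-vanishing solution of a Jacobi-type equation, then invoke Lemma \ref{lema4} and conclude via Theorem \ref{teoremaco} and \eqref{coercive}. They differ in \emph{what is perturbed} and \emph{where the positive slack comes from}. The paper perturbs the coefficient of the equation, replacing $P$ by $P-\sigma$ with $0<\sigma<\min_{[a,b]}P$: completing the square with respect to $P-\sigma$ leaves the explicit remainder $\sigma\int_a^b (v')^2$ in \eqref{sdelta}, Lemma \ref{lema4} is applied directly to $v$, and the absence of conjugate points for the modified equation \eqref{auxiliar2} is obtained by continuous dependence on the parameter $\sigma$. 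You instead keep the original Jacobi equation \eqref{eq:Jacobi} and perturb the initial point: extending $P,Q$ to $[a-\eta,b]$ and sliding the zero from $a$ to $a-\eta$ produces a solution $w>0$ on all of $[a,b]$, the completion of the square becomes the exact identity $\delta^2 J(y_0,v)=\tfrac12\int_a^b Pw^2\bigl((v/w)'\bigr)^2\,\mathrm{d}x$, and the coercive constant comes from $Pw^2\geq\kappa>0$, Friedrichs applied to $\phi=v/w$, and the norm equivalence $v=w\phi$. Each route buys something: your endpoint-shift makes explicit a detail the paper elides --- the solution tracking conjugate points of $a$ vanishes at $x=a$ itself, so the corresponding $\tilde g$ of \eqref{auxiliar1} is a priori undefined there, and the standard repair is precisely your extension to $[a-\eta,b]$ --- whereas the paper's $\sigma$-trick yields its slack term directly in the variable $v$, so it needs no norm-equivalence step and gives a cleaner final constant. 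Note finally that both arguments (yours and the paper's) establish coercivity only in the integral norm $\int_a^b v^2+\int_a^b(v')^2$ rather than in the $\mathcal{C}^1$-norm with which the space was originally equipped; this mismatch is inherited from the paper and is not an additional gap introduced by your proof.
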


\begin{proof}
Recall from \eqref{desarrollolag} the expression
$$
\delta^2 J(y_0 ,v) = \int^b_ a \left( P(v')^2 +2gvv'+(g'+Q)v^2 \right) .
$$
Because of the assumptions made on the continuity of the derivatives of $L$ and the compactness of $[a,b]$,
we can choose a number $\sigma$ such that $0\leq \sigma <\min_{[a,b]}\{ P(x)\}$. Inserting $\sigma P(v')^2-\sigma P(v')^2$
in the equation above, and repeating the computation in \eqref{desarrollolag} gives
\beq \label{sdelta}
\delta^2 J(y_0 ,v) &=& \int^b_ a  (P-\sigma )\left( v'+\frac{g}{P-\sigma}\right)^2\mathrm{d}x \\
&+&\int^b_a \left( g' +Q-\frac{g^2}{P-\sigma}\right) v^2\mathrm{d}x
+\sigma\int^b_a (v')^2 \mathrm{d}x .\nn
\eeq
As $P(x)>0$ and we have chosen $\sigma$ such that $P(x)-\sigma >0$ on $[a,b]$, the first integral is positive, as it is the third one. In order to cancel out the second integral, we must take a $g\in\mathcal{C}^1 ([a,b])$ such that
$$
g'+Q-\frac{g^2}{P-\sigma}=0.
$$
Introducing a function $f\in\mathcal{C}^2 ([a,b])$ through
\beq
g=-\frac{f'}{f}(P-\sigma ),
\label{auxiliar1}
\eeq
we arrive at the equation for $f$:
\beq
-\frac{\mathrm{d}}{\mathrm{d}x}\left( (P-\sigma )\frac{\mathrm{d}f}{\mathrm{d}x} \right) +Qf=0.
\label{auxiliar2}
\eeq
By the theorem on the dependence on parameters of the solutions to a second order differential equation, the
general solution of \eqref{auxiliar2} can be written as $f(x,\sigma )$, with $f(x,0)=f(x)$. Note that, by hypothesis,
$f(x,0)$ does not admit points conjugate to $a$ in $[a,b]$, so (by continuity), neither does $f(x,\sigma )$ for $\sigma >0$
but close enough to $0$. If $\tilde{f}(x)=f(x,\sigma )$ is such a solution, by substituting the corresponding $\tilde{g}$ of
equation \eqref{auxiliar1} into \eqref{sdelta}, we get:
\begin{align*}
\delta^2 J(y_0 ,v) &= \int^b_ a  (P-\sigma )\left( v'+\frac{\tilde{g}}{P-\sigma}\right)^2\mathrm{d}x 
+\sigma\int^b_a (v')^2 \mathrm{d}x \\
&\geq \sigma\int^b_a (v')^2 \mathrm{d}x.
\end{align*}
Now, applying Lemma \ref{lema4},
$$
\delta^2 J(y_0 ,v)\geq \frac{\sigma}{1+\frac{(b-a)^2}{2}}||v||^2:=c||v||^2 .
$$
The statement follows then from Theorem \ref{teoremaco} (see also the comments at the end of subsection \ref{locext}).
\end{proof}
\begin{rem}
We can obtain a criterion for a local maximum just by considering the condition $P(x)=D_{33}L(x,y(x),y'(x))<0$ and repeating the computations in the theorem with the inequalities reversed.
\end{rem}
\begin{rem}\label{supermethod}
In order to apply the criterion of conjugate points in practice, it is desirable to have at our disposal
some tools for explicitly computing solutions of the Jacobi equation \eqref{eq:Jacobi}. An old (but useful) method
(\cite{pars} pp. 56--57, \cite{fox} pp.42--43) is the following: the general solution of Euler-Lagrange's equations
(which are second order) has the form $y=y(x;\alpha ,\beta )$, where $\alpha ,\beta$ are constants of integration
(on which the $y$ dependence is differentiable, under some mild conditions. In the examples this will be obvious).
Then, $D_2 y(x;\alpha ,\beta )\equiv \frac{\partial y}{\partial \alpha}$ and $D_3 y(x;\alpha ,\beta )\equiv \frac{\partial y}{\partial \beta}$ are two independent solutions of the Jacobi equation\footnote{The
proof is extremely simple: just take derivatives with respect to, say, $\alpha$ in Euler-Lagrange's equations
(applying the chain rule) and collect terms, taking into account that
$$
\frac{\mathrm{d}}{\mathrm{d}x}\left( \frac{\partial^2 L}{\partial y\partial y'}\frac{\partial y}{\partial \alpha} \right)
=\frac{\mathrm{d}}{\mathrm{d}x}\left( \frac{\partial^2 L}{\partial y\partial y'}\right) \frac{\partial y}{\partial \alpha}
+ \frac{\partial^2 L}{\partial y\partial y'}\frac{\partial y'}{\partial \alpha}.
$$}. We will use this method in example \ref{n5}.

\end{rem}

\section{Convex functionals}

In this section we will discuss the particular case of convex Lagrangians.  
We will start by recalling that a subset $S \subset \mathbb{R}^n$ is said to be convex if 
for all $p,q \in S$ the interval $[p,q]$ lies entirely 
inside of $S$.  This is equivalent to say that
\begin{equation}
[p,q]:= \lbrace p + t(q-p) = tq + (1-t)p : 0 \leq t \leq 1 \rbrace \subset S.
\end{equation}

\begin{defn}
A function $f:S \subset \mathbb{R}^2$ defined over a convex 
set, is said to be convex if
$$
f([p,q]) \leq [f(p),f(q)]
$$ 
or, equivalently, for all $t \in [0,1]$: 
\beq
f(p + t(q-p)) \leq f(p)+t(f(q)-f(p))
\label{convex1}
\eeq
\end{defn}

Let $f:S\to\mathbb{R}$ be a convex \emph{differentiable} function. For any $t\in [0,1]$, $p,q\in S$ we have
\eqref{convex1}, and, on the other hand, by the intermediate value theorem, there exists a $w_t \in [p,p+t(q-p)]$ such that
\beq
f(p+t(q-p))=f(p)+t\mathrm{d}_{w_t}f(q-p),
\label{convex2}
\eeq
where $\mathrm{d}_{w_t}f$ denotes the differential of $f$ at $w_t$.
From \eqref{convex1} and \eqref{convex2} we get $f(p)+t\mathrm{d}_{w_t}f(q-p) \leq tf(q)+(1-t)f(p)$, that is:
$$
f(p)+\mathrm{d}_{w_t}f(q-p)\leq f(q).
$$
Taking the limit $t\to 0$ implies $w_t \to p$, so:
\beq
f(p)+\mathrm{d}_{p}f(q-p)\leq f(q).
\label{convex3}
\eeq

\begin{rem}
There exist well-known criteria (in terms of the Hessian matrix) to decide whether a function
$f:S\to \mathbb{R}$ is convex or not, see for instance \cite{vanbrunt} sec. 10.7. We will apply
one such criterion in Example \ref{lane0}.
\end{rem}

As a straightforward application of these results, we have the following theorem, stating that the critical points
of an action with a convex Lagrangian are always minimals.

\begin{thm}
\label{thm:convex}
Consider a set $U=[a,b]\times S\subset \mathbb{R}^3$, such that for each fixed $x\in [a,b]$ the set
$S_x =\{ (x,u,v)\in U\}\subset \mathbb{R}^2$ is convex. Suppose that for any $x \in [a,b]$, the Lagrangian function
$L(x,\cdot ,\cdot ):S_x \to\mathbb{R}$ is convex. Then, any critical path $y_0=y_0(x)$ is a minimal solution, among the 
paths with the same endpoints, for the corresponding action functional $J(y)=\int^b_a L(x,y(x),y'(x))\mathrm{d}x$.
\end{thm}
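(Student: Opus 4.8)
The plan is to bypass the second-variation machinery entirely and obtain minimality directly — in fact globally, among all admissible competitors — from the pointwise convexity inequality \eqref{convex3}. First I would fix an arbitrary competing path $y\in D_U$ sharing the endpoints of $y_0$, so that $v:=y-y_0$ lies in $\mathcal{C}^1_0([a,b])$, and then work one value of $x$ at a time.

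For each fixed $x\in[a,b]$, the function $L(x,\cdot,\cdot):S_x\to\mathbb{R}$ is convex by hypothesis and differentiable since $L\in\mathcal{C}^2(U)$. Because $S_x$ is convex, both $p=(y_0(x),y_0'(x))$ and $q=(y(x),y'(x))$ lie in $S_x$ and the segment joining them stays in $S_x$, so \eqref{convex3} applies with $f=L(x,\cdot,\cdot)$, $p$, and $q$. Writing the differential of $L(x,\cdot,\cdot)$ at $p$ in the direction $q-p=(v(x),v'(x))$ as $D_2L(x,y_0,y_0')\,v(x)+D_3L(x,y_0,y_0')\,v'(x)$, the inequality reads, pointwise in $x$,
$$
L(x,y_0,y_0')+D_2L(x,y_0,y_0')\,v+D_3L(x,y_0,y_0')\,v'\le L(x,y,y').
$$

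Next I would integrate this inequality over $[a,b]$. The left-hand side splits into $J(y_0)$ plus exactly the integral appearing in \eqref{eqn1}, namely the first variation $\delta J(y_0,v)$, while the right-hand side integrates to $J(y)$. Since $y_0$ is a critical path, $\delta J(y_0,v)=0$ for every $v\in\mathcal{C}^1_0([a,b])$ (equivalently, $y_0$ satisfies the Euler-Lagrange equation \eqref{eq:EulerLagrange}), so the cross term drops out and we are left with $J(y_0)\le J(y)$. As $y$ was an arbitrary admissible competitor with the same endpoints, $y_0$ is a (global) minimum.

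The only points requiring care are the hypotheses that make the pointwise step legitimate: that each slice $S_x$ is convex — guaranteeing the segment $[p,q]\subset S_x$ along which \eqref{convex3} was derived — and that $L(x,\cdot,\cdot)$ is genuinely differentiable there, which is where the $\mathcal{C}^2$ regularity of $L$ enters. I expect the main (though mild) obstacle to be a clean justification that admissibility of competitors, $(x,y(x),y'(x))\in U$ for all $x$, is precisely the condition placing $q$ in $S_x$, so that the pointwise inequality holds uniformly and integrates correctly. Notably, no compactness or coercivity argument is needed, which is exactly what makes the convex case so much stronger, yielding a global rather than merely local minimum.
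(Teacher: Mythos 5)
Your proof is correct and follows essentially the same route as the paper's: the pointwise convexity inequality \eqref{convex3} applied on each slice $S_x$ to the pair $(y_0(x),y_0'(x))$, $(y(x),y'(x))$, then integrated over $[a,b]$, with criticality of $y_0$ annihilating the linear term. The only cosmetic difference is at the final step: the paper integrates by parts and invokes the Euler-Lagrange equation pointwise, whereas you recognize the linear term as the first variation \eqref{eqn1} and use $\delta J(y_0,v)=0$ directly, a slightly cleaner finish that avoids differentiating $D_3L(x,y_0,y_0')$ with respect to $x$.
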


\begin{proof}
The hypothesis of convexity implies, by \eqref{convex3},
\beq
&&L(x,u_2,v_2) \geq L(x,u_1,v_1)+\mathrm{d}_{(x,u_1,v_1)}L((u_2,v_2)-(u_1,v_1)) \label{convex4}\\
&&= L(x,u_1,v_1)+D_1 L(x,u_1,v_1)(u_2 -u_1)+ D_2  L(x,u_1,v_1)(v_2 -v_1).\nn 
\eeq
Now we compute the action on a critical path, $y_0 (x)$, and an arbitrary nearby one $y (x)$, with the 
same endpoints ($y_0 (a)=y(a)$ and $y_0(b)=y(b)$), and compare them. From \eqref{convex4}:
\beq
&&J(y)-J(y_0)=\int^b_a (L(x,y(x),y'(x))-L(x,y_0 (x),y'_0 (x)))\mathrm{d}x \nn \\
&&\geq \int^b_a (D_1L(x,y_0,y'_0)(y -y_0)+D_2L(x,y_0,y'_0)(y'-y'_0))\mathrm{d}x.
\eeq
The second term in the last integrand can be --as usual-- integrated by parts, we then get:
$$
J(y)-J(y_0)\geq \int^b_a \left( D_1L(x,y_0,y'_0)-\frac{\mathrm{d}}{\mathrm{d}x}D_2L(x,y_0,y'_0) \right) (y -y_0)\mathrm{d}x .
$$
But, by hypothesis, $y_0 (x)$ is a critical path; equivalently, for each $x\in [a,b]$ it satisfies the Euler-Lagrange equations \eqref{eq:EulerLagrange}, and this implies $J(y)\geq J(y_0)$.
\end{proof}
\begin{rem}\label{concave}
Reversing the inequalities, we obtain the corresponding result for concave functionals.
\end{rem}

\section{Sturmian Theory}\label{sturmian}
\label{sec:sturm}
Sturmian theory is concerned with the analysis of the zeros that
a solution of a linear second order differential equation, of the 
form
\beq
\label{eq:sturm}
\dfrac{d^2y}{dx^2}+p(x) \dfrac{dy}{\mathrm{d}x} + q(x) y =0,
\eeq
(with $p(x),q(x)$ piece-wise continuous) has in a given interval of the independent variable. This theory is
an invaluable tool to check the properties of critical points, as we will see in the examples of the next section.\\
The results stated here without proof are well-known (see \cite{agarwal}, \cite{simmons}). We write them just for easy reference.\\
As it is well-known, the differential equation~\eqref{eq:sturm} 
may be written, through the change of variable $v=y\exp (\frac{1}{2}\int p\mathrm{d}x)$, in its normal form
\beq
\label{eq:normal}
\dfrac{\mathrm{d}^2v}{\mathrm{d}x^2} + r(x) v =0  \,,
\eeq 
where $r(x) = q(x) - \frac{1}{4} p^2(x) - \frac{1}{2} p'(x)$, which 
clearly preserves the zeros of the solutions to~\eqref{eq:sturm} if
$\int^x p(s)\mathrm{d}s$ is finite for finite $x$. The first observation is that the zeros of such an
equation cannot accumulate.

\begin{prop}\label{minimum}
Let $y(x)$ be a non trivial solution of \eqref{eq:sturm} or \eqref{eq:normal}. Then, its zeros are simple and the set they form does not have
accumulation points. Thus, on each closed interval $[a,b]$, $y(x)$ only possess a finite number of zeros. 
\end{prop}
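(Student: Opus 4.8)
The plan is to reduce every assertion to the existence-and-uniqueness theorem for the initial value problem attached to \eqref{eq:sturm}. Since the substitution $v=y\exp(\tfrac12\int p\,\mathrm{d}x)$ multiplies $y$ by a nowhere-vanishing factor, the zero sets of $y$ and of the associated normal-form solution coincide, so it suffices to argue for one of the two forms. The single fact I would invoke is that, on any interval where $p$ and $q$ are continuous, a solution with $y(x_0)=0=y'(x_0)$ at some interior point $x_0$ must be identically zero.

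First I would prove that the zeros are simple. If $x_0$ were a zero with $y'(x_0)=0$ as well, then $y$ solves the homogeneous equation with vanishing Cauchy data at $x_0$; by uniqueness it vanishes on the maximal subinterval of continuity of $p,q$ containing $x_0$, and since solutions remain $C^1$ across the finitely many jumps of the coefficients, continuation forces $y\equiv 0$ on all of $[a,b]$, contradicting non-triviality. Hence $y(x_0)=0$ implies $y'(x_0)\neq 0$, and every zero is simple.

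Next, a simple zero is \emph{isolated}: from $y(x)=y'(x_0)(x-x_0)+o(x-x_0)$ with $y'(x_0)\neq 0$, the solution cannot vanish in a punctured neighbourhood of $x_0$. Now suppose the zero set had an accumulation point $x_*$. By continuity of $y$ we would have $y(x_*)=0$, so $x_*$ is itself a zero; but being an accumulation point of zeros it is not isolated, contradicting the previous sentence. Therefore the zero set has no accumulation points. Finiteness on $[a,b]$ is then immediate from Bolzano--Weierstrass: an infinite subset of the compact interval would have an accumulation point in $[a,b]$, which we have just excluded.

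The analytic core is entirely standard; the only point requiring care is the piecewise continuity of $p$ and $q$. The uniqueness step must be run on each maximal subinterval of continuity and then patched across the finitely many discontinuities, using that a solution of \eqref{eq:sturm} is $C^1$ even where its second derivative jumps. I expect this bookkeeping, rather than any genuine analytic difficulty, to be the only obstacle.
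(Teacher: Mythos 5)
Your proof is correct and takes essentially the same route as the paper's: zeros are simple by uniqueness for the initial value problem, an accumulation point of zeros would (by continuity) be a zero at which $y'$ also vanishes, and finiteness on $[a,b]$ follows from compactness. The only differences are minor: where you show a simple zero is isolated via the expansion $y(x)=y'(x_0)(x-x_0)+o(x-x_0)$, the paper instead applies Rolle's theorem to produce a sequence of zeros of $y'$ converging to the accumulation point; and your explicit patching of the uniqueness argument across the discontinuities of the piecewise-continuous coefficients addresses a point the paper's one-line appeal to uniqueness passes over silently.
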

\begin{proof}
If $x_0$ were a double zero, $y(x_0)=0=y'(x_0)$, so $y(x)$ would be the trivial solution by uniqueness.\\
If $x_0$ were an accumulation point for the zeros of $y(x)$, there would be a sequence of zeros $(x_n)$ such that
$x_n \to x_0$. By Rolle's theorem, there is a zero of the derivative between any two consecutive zeros of the function, so there would be a sequence $(u_m)$ of zeros of $y'(x)$ such that $u_m \to x_0$. Then, by continuity of both $y$ and $y'$, we would have $y(x_0)=0=y'(x_0)$, which, as we have just seen, is impossible.
\end{proof}
\noindent As a corollary, the zeros of a non trivial solution of \eqref{eq:sturm} or \eqref{eq:normal} are: either a finite set, or a
sequence diverging to $+\infty$, or a sequence diverging to $-\infty$, or a sequence diverging to $\pm \infty$.\\
This applies in particular to the Jacobi equation \eqref{eq:Jacobi}. Thus, if we have a solution $f(x)$ on the interval $[a,b]$ such that $f(a)=0$, its first zero after $x=a$ must be located at a point $c>a$. In other words: there exists a
$c>a$ such that there are no conjugate points in the interval $[a,c]$.
A direct consequence of this fact is that for a short enough interval, critical points of the action 
$J(y)=\int^b_a L(x,y,y')\mathrm{d}x$ \emph{such that $P(x)=D_{33} L(x,y,y')>0$}, are local minimizers.\\
Some authors state, erroneously, that \emph{for any} Lagrangian the critical points are local minimizers. The origin of the confusion can be traced back to the fact that this is true for \emph{natural} Lagrangians\footnote{That is, those of the form $L(y,y')=K(y')-V(y)$ where $K$ is a positive-definite quadratic form associated to some metric (usually $K(y')=(y')^2/2$, that is, the metric is the euclidean one) and $V$ is some $\mathcal{C}^1$ function. Note that in this case $P\geq 0$.}, in particular it is true for free Lagrangians ($V=0$), for which the trajectories are geodesics. However, not every system of interest in Physics is natural. Of all the examples presented in this note, only Example \ref{qpotential} is natural; and in Example \ref{entropia} we present a case for which the critical points are maximizers.


\begin{defn}
If every solution $y$ of \eqref{eq:sturm} or \eqref{eq:normal} has
arbitrarily large (in absolute value) zeros, then the
equation (and all its solutions) are called oscillatory. Otherwise, the equation
and all of its solutions are called non-oscillatory.
\end{defn}

\begin{thm}
Let be $y_1(x)$ and $y_2(x)$ denote two linear independent solutions of
equation~\eqref{eq:normal}.  Then, the zeros of both functions are distinct and alternating in the sense that $y_1(x)$ has a 
zero between any two consecutive zeros of $y_2(x)$, and vice-versa.
\end{thm}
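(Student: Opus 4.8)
The plan is to study the Wronskian $W(x)=y_1(x)y_2'(x)-y_1'(x)y_2(x)$ of the two solutions. First I would note that, because the normal form~\eqref{eq:normal} carries no first-order term, differentiating $W$ and substituting $y_i''=-r\,y_i$ yields $W'=y_1 y_2''-y_1'' y_2 = -r\,y_1 y_2 + r\,y_1 y_2=0$, so $W$ is constant on any interval. Since $y_1$ and $y_2$ are linearly independent, this constant is nonzero.

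The fact that the zeros are distinct then follows immediately: if $y_1$ and $y_2$ vanished at a common point $x_0$, we would have $W(x_0)=0$, contradicting that $W$ is a nonzero constant. Thus no zero of $y_1$ can coincide with a zero of $y_2$.

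For the alternation, I would take two consecutive zeros $a'<b'$ of $y_2$ (so that $y_2(a')=y_2(b')=0$ and $y_2\neq 0$ on $]a',b'[$) and evaluate the Wronskian at these points. Because $y_2(a')=y_2(b')=0$, we obtain $W=y_1(a')y_2'(a')=y_1(b')y_2'(b')$. By Proposition~\ref{minimum} the zeros of $y_2$ are simple, hence $y_2'(a')\neq 0\neq y_2'(b')$; moreover, since $y_2$ keeps a constant sign throughout $]a',b'[$ and returns to zero at $b'$, the derivatives $y_2'(a')$ and $y_2'(b')$ must carry \emph{opposite} signs. As $W$ is the same nonzero constant at both endpoints, it follows that $y_1(a')$ and $y_1(b')$ have opposite signs, and the intermediate value theorem produces a zero of $y_1$ in $]a',b'[$. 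Interchanging the roles of $y_1$ and $y_2$ shows the interlacing is mutual; it also yields uniqueness, since two zeros of $y_1$ inside $]a',b'[$ would force a zero of $y_2$ strictly between them, contradicting that $a'$ and $b'$ are consecutive.

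The step I expect to be the main obstacle — more delicate than it first appears — is justifying that $y_2'(a')$ and $y_2'(b')$ have opposite signs. This is where both the simplicity of the zeros (Proposition~\ref{minimum}) and the consecutiveness of $a'$ and $b'$ are genuinely needed, and it is the geometric heart of the argument; everything else reduces to the constancy and non-vanishing of the Wronskian.
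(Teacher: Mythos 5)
Your proof is correct; the main thing to note is that the paper itself offers \emph{no} proof of this theorem --- it is stated in Section \ref{sec:sturm} as a well-known result of Sturmian theory, with a pointer to \cite{agarwal} and \cite{simmons} --- and your Wronskian argument is precisely the classical proof found in those references, so there is no divergence to speak of. Two small remarks. First, the step you flag as the main obstacle is less delicate than you fear: if $y_2>0$ on $]a',b'[$, then $y_2'(a')=\lim_{h\to 0^+}y_2(a'+h)/h\geq 0$ while $y_2'(b')=\lim_{h\to 0^+}\bigl(-y_2(b'-h)/h\bigr)\leq 0$, and since both are nonzero they automatically have opposite signs; no further geometric input is required. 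Second, you need not invoke Proposition \ref{minimum} at all: since $W(a')=y_1(a')y_2'(a')$ equals a nonzero constant, both $y_2'(a')\neq 0$ and $y_1(a')\neq 0$ come for free, which makes your argument entirely self-contained (and, as a byproduct, re-proves simplicity of the zeros).
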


\begin{thm}\label{nozeros}
If $r(x) \leq 0$ on $[a,b]$, then no non-trivial solution of \eqref{eq:normal} can have
two zeros on $[a,b]$.
\end{thm}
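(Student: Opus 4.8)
The plan is to argue by contradiction and exploit the sign of $r$ through the convexity it forces on solutions. Suppose some non-trivial solution $v$ of \eqref{eq:normal} had two zeros in $[a,b]$. By Proposition \ref{minimum} the zeros of $v$ are isolated, so among them I can select two \emph{consecutive} ones $x_1 < x_2$, meaning $v(x_1)=v(x_2)=0$ while $v$ does not vanish on $]x_1,x_2[$. Since $-v$ is again a solution of \eqref{eq:normal}, I may assume without loss of generality that $v>0$ throughout $]x_1,x_2[$.

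The key observation is that on this subinterval the equation reads $v''=-r(x)\,v$, and because $r(x)\le 0$ and $v>0$ there, we get $v''\ge 0$ on $]x_1,x_2[$. A function with non-negative second derivative is convex (the Hessian-type criterion alluded to in the previous section), so $v$ satisfies the chord inequality \eqref{convex1} on $[x_1,x_2]$. But the chord joining the endpoint values $v(x_1)=0$ and $v(x_2)=0$ is identically zero, so \eqref{convex1} gives $v(x)\le 0$ for every $x\in[x_1,x_2]$. This contradicts $v>0$ on the interior, and the theorem follows.

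An alternative route, which avoids having to isolate consecutive zeros, is an energy identity: for \emph{any} two zeros $x_1<x_2$ of $v$, multiply \eqref{eq:normal} by $v$ and integrate over $[x_1,x_2]$; integrating the $v''v$ term by parts and using $v(x_1)=v(x_2)=0$ to kill the boundary contribution yields
$$
\int_{x_1}^{x_2} (v')^2\,\mathrm{d}x = \int_{x_1}^{x_2} r\,v^2\,\mathrm{d}x \le 0 .
$$
Since the left-hand side is non-negative, both integrals vanish, forcing $v'\equiv 0$ and hence $v\equiv 0$ on $[x_1,x_2]$; by uniqueness of solutions (as used in Proposition \ref{minimum}) this makes $v$ trivial, a contradiction. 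The only point demanding care in this version is the integration by parts when $r$ --- and therefore $v''$ --- is merely piecewise continuous, but $v\in\mathcal{C}^1$ with $v''$ integrable, so the identity holds without modification. In either approach the argument is short; the main thing to get right is the reduction that pins down the sign of $v$ (convexity version) or the vanishing of the boundary term (energy version), both of which hinge only on the two prescribed zeros.
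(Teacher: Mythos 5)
You should note first that the paper itself supplies no proof of Theorem \ref{nozeros}: it is one of the Sturmian results quoted ``for easy reference'' from \cite{agarwal} and \cite{simmons}, so there is no in-paper argument to compare yours against. Both of your arguments are correct, and either one would serve as the missing proof. The convexity route is the classical textbook argument: Proposition \ref{minimum} (finitely many, simple zeros) lets you pass to consecutive zeros, the sign normalization $v>0$ is legitimate by linearity, and then $v''=-r v\geq 0$ forces $v$ to lie below the zero chord, a contradiction. One small caveat you only raise in the second version but which applies here too: with $r$ merely piecewise continuous, $v''$ may fail to exist at finitely many points; what survives, and is all you need, is that $v'$ is continuous and non-decreasing on $]x_1,x_2[$, which already yields the chord inequality. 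The energy-identity route is in some ways better suited to this paper: it needs no reduction to consecutive zeros, no sign normalization, and no appeal to Proposition \ref{minimum} except for the uniqueness step, and it makes visible why the theorem is relevant here --- the identity $\int_{x_1}^{x_2}\bigl((v')^2-r v^2\bigr)\,\mathrm{d}x=0$ for a solution vanishing at both endpoints says precisely that the quadratic form associated with the normal-form Jacobi operator (the second variation \eqref{eq:2ndvariation} with $P\equiv 1$, $Q=-r$) is degenerate along $v$, which is impossible when $r\leq 0$ makes that form positive definite. Your handling of the integration by parts under the regularity actually available ($v\in\mathcal{C}^1$ with $v''$ integrable) is also correct. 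In short: no gap; you have given two valid proofs of a statement the paper leaves unproved.
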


\begin{thm}[Sturm's Comparison Theorem]\label{th:sturm}
Let $y_1(x)$ and $y_2(x)$ be non-trivial solutions to the 
differential equation~\eqref{eq:normal} with 
$r_1(x)$ and $r_2(x)$, respectively. 
If $r_1>r_2$ in a certain interval $[a,b]$,
then $y_1(x)$ has at least a zero between two consecutive 
zeros of $y_2(x)$, unless $y_1 =y_2$ on $[a,b]$.
\end{thm}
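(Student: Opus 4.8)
The plan is to argue by contradiction using the cross-Wronskian of the two solutions, which is the classical Sturm device adapted to the normal form~\eqref{eq:normal}. First I would fix two consecutive zeros $x_1<x_2$ of $y_2$ inside $[a,b]$, so that $y_2(x_1)=y_2(x_2)=0$ and $y_2$ keeps a constant sign on the open interval $(x_1,x_2)$; replacing $y_2$ by $-y_2$ if necessary, I may assume $y_2>0$ on $(x_1,x_2)$. Suppose, towards a contradiction, that $y_1$ has no zero in $(x_1,x_2)$. Then $y_1$ also has constant sign there, and after possibly replacing $y_1$ by $-y_1$ I assume $y_1>0$ on $(x_1,x_2)$, whence by continuity $y_1(x_1)\ge 0$ and $y_1(x_2)\ge 0$.

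Next I would introduce the quantity $W(x)=y_1(x)y_2'(x)-y_1'(x)y_2(x)$ and differentiate it. Using $y_1''=-r_1 y_1$ and $y_2''=-r_2 y_2$ from~\eqref{eq:normal}, the first-derivative products cancel and the second-derivative terms combine, giving
$$
W'(x)=y_1 y_2''-y_1'' y_2=(r_1-r_2)\,y_1 y_2.
$$
Integrating over $[x_1,x_2]$ and using the strict hypothesis $r_1>r_2$ together with $y_1 y_2>0$ on the open interval, the integrand is strictly positive on the interior, so
$$
W(x_2)-W(x_1)=\int_{x_1}^{x_2}(r_1-r_2)\,y_1 y_2\,\mathrm{d}x>0.
$$

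It then remains to evaluate the boundary terms and reach the contradiction. Since $y_2(x_1)=y_2(x_2)=0$, the definition of $W$ collapses to $W=y_1 y_2'$ at each endpoint. By Proposition~\ref{minimum} the zeros of $y_2$ are simple, so $y_2'(x_1)\neq 0$ and $y_2'(x_2)\neq 0$; as $y_2$ rises from $0$ at $x_1$ and returns to $0$ at $x_2$ while staying positive in between, one gets $y_2'(x_1)>0$ and $y_2'(x_2)<0$. Combined with $y_1(x_1)\ge 0$ and $y_1(x_2)\ge 0$ this yields $W(x_1)\ge 0\ge W(x_2)$, hence $W(x_2)-W(x_1)\le 0$, contradicting the strict positivity just obtained. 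Therefore $y_1$ must vanish somewhere in $(x_1,x_2)$.

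The only delicate point, and the main obstacle, is the sign bookkeeping at the endpoints: the argument has to survive the possibility that $y_1$ itself vanishes at $x_1$ or $x_2$, which is exactly why I keep the boundary inequalities non-strict ($\ge$) and instead extract strictness solely from the interior integral. This is where simplicity of the zeros (Proposition~\ref{minimum}) is essential, since it forces $y_2'$ to be nonzero, and hence of definite sign, at the endpoints. Finally, I note that under the stated strict hypothesis $r_1>r_2$ the escape clause ``unless $y_1=y_2$'' is vacuous, as two functions solving~\eqref{eq:normal} with different coefficients cannot coincide on $[a,b]$; that clause would become relevant only in the borderline formulation $r_1\ge r_2$.
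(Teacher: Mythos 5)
Your proof is correct, but there is nothing in the paper to compare it against: Theorem \ref{th:sturm} is stated in Section \ref{sturmian} without proof, as one of the results quoted ``for easy reference'' from \cite{agarwal} and \cite{simmons}. What you give is the classical Wronskian comparison argument, and every step checks out. The identity $W'=(r_1-r_2)\,y_1y_2$ for $W=y_1y_2'-y_1'y_2$ follows directly from \eqref{eq:normal}; the integral of the right-hand side over $[x_1,x_2]$ is strictly positive because the integrand is non-negative and strictly positive on the open interval; and at the endpoints $W$ collapses to $y_1y_2'$, where simplicity of the zeros of $y_2$ (Proposition \ref{minimum}) gives $y_2'(x_1)>0$ and $y_2'(x_2)<0$, hence $W(x_1)\ge 0\ge W(x_2)$, which contradicts $W(x_2)-W(x_1)>0$. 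Your endpoint bookkeeping is exactly right: keeping the inequalities for $y_1$ non-strict and extracting strictness solely from the interior integral is what covers the delicate case in which $y_1$ itself vanishes at $x_1$ or $x_2$. Your closing remark is also a genuine (small) sharpening of the statement as printed: under the strict hypothesis $r_1>r_2$ the escape clause ``unless $y_1=y_2$'' is vacuous, since $y_1=y_2=y$ on $[a,b]$ would force $(r_1-r_2)\,y\equiv 0$, i.e.\ $y\equiv 0$ on $[a,b]$, impossible for a non-trivial solution by Proposition \ref{minimum}; that clause is really a remnant of the weaker formulation with $r_1\ge r_2$.
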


\section{Examples}
In this section we will develop some physically motivated examples
in order to elucidate the ideas analysed so far. In each case, the regularity conditions on the Lagrangian are trivially satisfied. Unless otherwise explicitly stated, we will work on the space $X=\mathcal{C}^1([a,b])$. Also, in some examples
we will follow the notation common in physics, taking $t$ as the independent variable and $x$, $\dot{x}=\mathrm{d}x/\mathrm{d}t$
as the dependent ones.

\subsection{Driven harmonic oscillator}\label{driven}
As it is well known, the differential equation for a 
driven damped harmonic oscillator under a sinusoidal 
external force is given by~\cite{goldstein}, \cite{hand}, \cite{saletan},
\cite{taylor}, \cite{thornton}:
\beq\label{eq:driven}
\ddot{x} + \beta \dot{x} + \omega^2_0 x = \sin( \omega t)
  \,.
\eeq
Its solutions are well-known. They have the form
\beq\label{soldamped}
x(t)=\frac{1}{\omega Z}\sin (\omega t+\varphi )\, ,
\eeq
where $Z=\sqrt{\beta^2 +\frac{1}{\omega^2}(\omega^2_0 -\omega^2)}$ is the impedance and
$\varphi =\arctan \left( \frac{\beta \omega}{\omega^2_0 -\omega^2} \right)$ is the phase.\\
The corresponding Lagrangian function is:
$$
L(t,x,\dot{x}) = \frac{1}{2} e^{\beta t} \left( \dot{x}^2 - 
\frac{\beta \sin(\omega t) -
\omega \cos(\omega t)}{\omega^2 + \beta^2}\dot{x} - \omega^2_0 x^2 \right) \,.
$$
Here we follow the standard conventions and denote 
$\omega_0\in\mathbb{R}$ as the natural oscillation frequency, $\beta\in\mathbb{R}^+$ is the damping parameter, and
$\omega\in\mathbb{R}$ stands
for the frequency of the driving force. For this Lagrangian,
we straightforwardly note that the functions $P=e^{\beta t}>0$ and 
$Q=- \omega^2_0 e^{\beta t}$, given by~\eqref{eq:Pfunc} 
and~\eqref{eq:Qfunc}, respectively, lead to the Jacobi equation
\beq
\dfrac{\mathrm{d}^2f}{\mathrm{d}t^2} + \beta \dfrac{\mathrm{d}f}{\mathrm{d}t} + \omega^2_0f = 0
\eeq
which is the damped harmonic oscillation equation for 
the function $f$.  Note that this equation is damped by the 
same amount as the driven equation~\eqref{eq:driven}.\\
This is a second-order linear equation
of constant coefficients, so it can be analytically solved and we get the general solution
(for the underdamped\footnote{The other cases (critical damping and overdamping) are treated similarly.} case $\beta <\omega_0$) in the form
$$
f(t)={e}^{-\frac{\beta\,t}{2}}\,\left( k_1\,\mathrm{sin}\left( \frac{\sqrt{4\,{\omega}_{0}^{2}-{\beta}^{2}}\,t}{2}\right) +k_2\,\mathrm{cos}\left( \frac{\sqrt{4\,{\omega}_{0}^{2}-{\beta}^{2}}\,t}{2}\right) \right) ,
$$
where $k_1,k_2$ are constants of integration. The solution verifying $f(0)=0$ can be written as
$$
f(t)=C\frac{2}{\sqrt{4\,{\omega}_{0}^{2}-{\beta}^{2}}}\,{e}^{-\frac{\beta\,t}{2}}\,\mathrm{sin}\left( \frac{\sqrt{4\,{\omega}_{0}^{2}-{\beta}^{2}}\,t}{2}\right),
$$
from which we clearly see that its zeros are located at the values $t=\frac{2k\pi}{\sqrt{4\,{\omega}_{0}^{2}-{\beta}^{2}}}$,
for $k\in\mathbb{Z}$. The first zero after $t=0$ is located at $t=\frac{2\pi}{\sqrt{4\,{\omega}_{0}^{2}-{\beta}^{2}}}$, so
the solution \eqref{soldamped} is a minimum on the interval $[0,\frac{2\pi}{\sqrt{4\,{\omega}_{0}^{2}-{\beta}^{2}}}[$. When $\beta =0$ this
interval becomes $[0,\frac{\pi}{\omega_0}[$, which is the particular case of the harmonic oscillator (see \cite{gray}, pg. 446).

\subsection{Lane-Emden Equations}

The Lane-Emden second-order differential equation was originally proposed by Lane~\cite{lane}, and 
studied in detail by Emden~\cite{emden} and Fowler~\cite{fowler}, 
in order to understand  
equilibrium configurations of spherical clouds of gas 
(self-gravitating polytropic gas spheres)~\cite{binney}, \cite{chandra},
\cite{collins}.  Lane-Emden equations also appears 
in several other contexts such as 
viscous fluid dynamics, radiation, 
condensed matter, relativistic mechanics, and 
even for systems under chemical reactions (see~\cite{wong}, 
~\cite{goenner}, 
and references therein for an account of its applications. For a mathematical treatment of their zeros, see \cite{coffman}).
The Lane-Emden equation is characterized by a non-linear 
term $y^n(x)$, where the non-negative parameter 
$n\in\mathbb{Z}$ (the polytropic index, in its original 
context)
defines the nature of the second-order differential equation 
\beq
\label{eq:LE}
\frac{1}{x^2} \frac{\mathrm{d}}{\mathrm{d}x} \left({x^2 \frac{\mathrm{d}y}{\mathrm{d}x}}\right) + y^n = 0 \,,
\eeq
which may be obtained from the associated Lagrangian
\beq
L(x,y,y')=x^2 \left( \frac{y'^{\, 2}}{2}-\frac{y^{n+1}}{n+1}\right)  \,,
\eeq
in the sense that its Euler-Lagrange equations reduce to~\eqref{eq:LE}. 
For this Lagrangian, the functions~\eqref{eq:Pfunc} 
and~\eqref{eq:Qfunc} are given by $P=x^2$ and $Q=-nx^2 y^{n-1}$,  respectively.  As in the preceding example, the 
function $P$ is always positive definite on any interval of the form $]0,b]$, $b\in\mathbb{R}$.  We also note
that the function $Q$ depends on the parameter $n$.  
In this way, we 
obtain the Jacobi equation~\eqref{eq:Jacobi}
\beq
\label{eq:LE-Jacobi}
\frac{1}{x^2}\frac{\mathrm{d}}{\mathrm{d}x}\left( x^2\frac{\mathrm{d}f}{\mathrm{d}x} \right)
+ny^{n-1}f=0  \,.
\eeq
As the most frequent analytical solutions to the Lane-Emden equation are 
those corresponding to the values $n=0,1,5$,~\cite{wong} (but see~\cite{goenner} for other cases),  
we will focus next 
on solution of both, Euler-Lagrange and Jacobi equations, 
for these cases.

\subsubsection{n=0}\label{lane0}
%
Consider any interval $[0,b]$.
For this case, the general solution 
$y(x)$ of Euler-Lagrange's equation~\eqref{eq:LE} 
reads
$$
y(x)=-\frac{x^2}{6}+ \frac{k_2}{x} + k_1  \,,
$$ 
where $k_1,\ k_2$ are arbitrary integration constants.
Note that this function is singular at the origin. The physical
origin of the problem demands that the solution verify $y(0)=1$,
$y'(0)=0$, so we must take $k_2 =0$, $k_1 =1$. The solution is then
$$
y(x)=-\frac{x^2}{6}+1.
$$
In this very simple case, it is not necessary to deal with the Jacobi
equation, as the corresponding Lagrangian
$$
L(x,y,y')=x^2 \left( \frac{y'^2}{2}-y \right) ,
$$
determines, for each $x\in [a,b]$ fixed, a function $L(x,\cdot ,\cdot )$
which is convex on the convex set $\mathbb{R}\times \mathbb{R}$, as it
has a semi-definite Hessian: for any $u,v\in\mathbb{R}$,
\begin{align*}
&\det \mathrm{Hess}L(x,u,v)=0 \\
&D_{22}L(x,u,v)=0 \\
&D_{33}L(x,u,v)=x^2\geq 0.
\end{align*}
Thus, the solution is minimal on $]0,b]$, for any $b>0$.

\subsubsection{n=1}\label{lanen1}
The general solution the Lane-Emden equation~\eqref{eq:LE}
is given by 
\beq\label{lane1}
y(x)=k_1\frac{\sin(x)}{x}+ k_2\frac{\cos(x)}{x} 
\eeq
where $k_1,\ k_2$ are constants.
Again, the physical meaning of the problem imposes the condition that the solutions be defined at
$x=0$ (where it must be $y(0)=1$), so we must take $k_1 =1,k_2 =0$, getting the $\mathrm{sinc}$
function
\beq\label{sinc}
y(x)=\frac{\sin (x)}{x}\, .
\eeq
As the functions $P=x^2$, $Q=-x^2$ in this case,  
Jacobi equation~\eqref{eq:LE-Jacobi} results again a 
Lane-Emden equation with $n=1$ for the function $f(x)$.
Hence, for $n=1$ solutions of the Jacobi equation and 
the Lane-Emden equation are of an identical nature.
Thus, in view of \eqref{lane1}, the solutions of the Jacobi
equation defined for $x=0$ are those of the form $f(x)=C\mathrm{sinc}(x)$.
They have zeros located at the points $x=k\pi$, $k\in \mathbb{Z}-\{1\}$. So, on
any interval of the form $](k-1)\pi ,k\pi[$, $k\in\mathbb{Z}-\{0,1\}$, the solutions \eqref{sinc} are minimal.
A minimum is also obtained on $]-\pi ,0[$ and $]0,\pi [$.

\subsubsection{n=5}\label{n5}
Analogously, in this case, the general solution to~\eqref{eq:LE}
reads (see \cite{chandra}, \cite{horedt}\footnote{Although in these references only a $1-$parameter family
of solutions is given, it is easy to trace back the missing parameter $\alpha$ from the calculations presented there
(it is fixed at certain point to make the output of an integral more manageable).}):
\beq
y(x;\alpha ,\beta )=  \sqrt{\frac{\alpha}{\frac{{\left( \alpha\,x\right) }^{2}}{\beta}+\frac{\beta}{3}}} \,.
\eeq
We further note that if we impose the boundary conditions $y(0)=1$ and 
$y'(0)=0$ (which set $\alpha =1$, $\beta =3$) our solution becomes the common one 
\beq\label{sollane}
y(x;1,3) = \frac{\sqrt{3}}{\sqrt{3 + x^2}}  \,.
\eeq
We will work on this simplified solution.  For this case,  the function $Q= -5x^2y^4$, thus yielding the 
Jacobi equation
\beq\label{Jacobin5} 
\dfrac{\mathrm{d}^2f}{\mathrm{d}x^2}+ \frac{2}{x} \dfrac{\mathrm{d}f}{\mathrm{d}x}+ \frac{45}{(3+x^2)^2} f =0 \,,
\eeq
which may be simplified to its normal form~\eqref{eq:normal} 
by the 
change $f(x)=u(x)/x$:
\beq
\label{eq:n5ujacobi}
\dfrac{\mathrm{d}^2u}{\mathrm{d}x^2} + \frac{45}{(3+x^2)^2} u =0   \,.
\eeq 
Let us apply the method outlined in Remark \ref{supermethod}. The derivative of
the general solution with respect to the parameter $\alpha$, evaluated at the
values that give the solution we are considering, is
$$
D_2 L(x;1,3) =-\frac{\sqrt{3}\,\left( {x}^{2}-3\right) }{2\,{\left( {x}^{2}+3\right) }^{\frac{3}{2}}}.
$$
It is immediate to check that this is indeed a solution of Jacobi's equation \eqref{Jacobin5}. The
derivative with respect to $\beta$ gives nothing new (a multiple of $D_2 L(x;1,3)$). By making the
change of variables stated above, we get:
$$
u(x)=-\frac{\sqrt{3}\,x\left( {x}^{2}-3\right) }{2\,{\left( {x}^{2}+3\right) }^{\frac{3}{2}}}.
$$
Note that $u(0)=0$. The first (and only) zero of $u(x)$ after $x=0$ is given by $x=\sqrt{3}$. Thus,
the solution to the Lane-Emden equation for $n=5$ \eqref{sollane}, is a minimum for $x\in [0,\sqrt{3}[$.

\subsection{Quantum gravity in one dimension}

In modern physics, spin foams
models have been introduced in order to analyse 
certain generalizations of path integrals
appearing in gauge theories.
In particular, in quantum gravity the spin foam approach has been
developed as a tool to understand dynamical issues of the theory
by the introduction of discretizations describing 
the metric properties of spacetime~\cite{baez}.
To some extent, spin foams for quantum gravity 
were motivated by a particular 
discretization of general relativity known as 
Regge calculus~\cite{williams}.  In this context, 
a discrete model for a scalar field 
representing gravity in one temporal dimension
has been studied in detail in~\cite{hamber}.  
Here we present the continuum analogue of this model.
The general action functional is:
$$
J(y)=\frac{1}{2}\int \sqrt{g(x)}\left( g^{-1}(x)y'(x)^2+\omega y^2 (x)\right) \mathrm{d}x,
$$
where $g:\mathbb{R}\to\mathbb{R}$ is a positive function which acts as the metric on $\mathbb{R}$, and
will be taken in what follows as $g(x)=\exp (x)$, for simplicity. Thus our model Lagrangian will be
$$
L(x,y,y')=\frac{1}{2}\exp ( x/2)\left( \exp (-x)(y')^2 +\omega y^2 \right) \, .
$$
The Euler-Lagrange equations readily follow:
$$
\omega \exp(x/2)y+\frac{1}{2}\exp(-x/2)y'-\exp (-x/2)y''=0,
$$
or, as $\exp (-x/2)>0$,
\beq\label{elqg}
y''-\frac{1}{2}y'-\omega e^x y=0.
\eeq
By making the change of variable $u=\exp (x/2)$, we can put \eqref{elqg} in the form
$$
u^2 \left( \frac{1}{4} \frac{\mathrm{d}y}{\mathrm{d}u} -\omega y(u) \right) =0,
$$ 
which, as $u=\exp (x/2)>0$, reduces to
$$
\frac{\mathrm{d}y}{\mathrm{d}u} -4\omega y(u)=0.
$$
This equation is integrated by elementary techniques; its solutions are:
$$
y(u)=k_1 \exp (2\sqrt{\omega}u)+k_2 \exp (-2\sqrt{\omega}u),
$$
(with $k_1,k_2$ constants of integration) or, in the original variable $x$:
\beq\label{solqg}
y(x)=k_1 \exp (2\sqrt{\omega}\exp (x/2))+k_2 \exp (-2\sqrt{\omega}\exp (x/2))\, .
\eeq
The coefficients of the Jacobi equation are $P=\exp (-x/2)>0$ and $Q=\omega\exp (x/2)$, so
the Jacobi equation is (after simplifying an $\exp (x/2)>0$ factor):
$$
\frac{\mathrm{d}^2f}{\mathrm{d}x^2}-\frac{1}{2}\frac{\mathrm{d}f}{\mathrm{d}x}-\omega e^x f=0,
$$
which has the same form as the Euler-Lagrange equation \eqref{elqg} (a phenomenon already
encountered in the case of the Lane-Emden equation for $n=1$, recall Example \ref{lanen1}).
Thus, the general solution to the Jacobi equation is
$$
f(x)=c_1 \exp (2\sqrt{\omega}\exp (x/2))+c_2 \exp (-2\sqrt{\omega}\exp (x/2)),
$$
with $c_1,c_2$ constants of integration which can be fixed by the initial conditions
$f(0)=0$, $f'(0)=1$, giving
$$
c_1=\frac{e^{-2\sqrt{\omega}}}{2\sqrt{\omega}}, \quad c_2=-\frac{e^{2\sqrt{\omega}}}{2\sqrt{\omega}}.
$$
Substituting in the expression for $f(x)$ above, the solution to the Jacobi equation
can be written as
\begin{align*}
f(x) &= \frac{1}{2\sqrt{\omega}}\left( \exp ((e^{x/2}-1)2\sqrt{\omega})- \exp (-(e^{x/2}-1)2\sqrt{\omega})\right) \\
&= \frac{1}{\sqrt{\omega}}\sinh (2\sqrt{\omega} (e^{x/2}-1) )\, ,
\end{align*}
which clearly shows that there are no conjugate points to $x=0$. The solutions \eqref{solqg} are
thus minimals on their interval of definition.

\subsection{Square root Hamiltonian with a dissipation term}

Square-root Hamiltonians (or Lagrangians) 
are a standard feature of many reparametrization 
invariant field theories~\cite{carlini}, \cite{puzio}.
The action of a relativistic particle~\cite{fiziev}, \cite{louko}, \cite{lucha}, 
and the action of the Nambu string
are familiar examples~\cite{fairlie}, \cite{zwiebach}.
Further examples of physical theories where this sort of 
Hamiltonians appear include general relativity~\cite{louko},
\cite{menotti}, as well as certain approaches to 
quantum gravity~\cite{carlip}, \cite{menotti}, and 
also, they appear in brane motivated models~\cite{fairlie},
\cite{rojas}.  Aspects on the quantization of these kind
of Hamiltonian theories may be found in~\cite{berezin} and \cite{puzio},
to mention some.  In this section, we will study the 
Hamiltonian for a free particle under relativistic motion 
with a linear dissipation term, as proposed in~\cite{gzlz}.
This Hamiltonian reads
\beq
\label{eq:sqrt}
H(p,x,t):=e^{\gamma t} \sqrt{1+p^2 e^{-2 \gamma t}} \,,
\eeq
where $p$ stands for the canonical momentum associated to 
the dependent variable $x=x(t)$, and $\gamma$ is the dissipation term.  As discussed in~\cite{gzlz},
in the low velocity regime, this Hamiltonian reduces to 
the Caldirola-Kanai Hamiltonian
which describes the motion of a non-relativistic particle with a linear dissipation term~\cite{razavy}.
The Lagrangian associated to~\eqref{eq:sqrt} is given by
\beq
L(t,x,\dot{x})=-e^{\gamma t} \sqrt{1-(\dot{x})^2}  \,,
\eeq
and thus the solution to Euler-Lagrange equation 
\beq
\frac{\mathrm{d}}{\mathrm{d}t} \left(  \frac{e^{\gamma t}\dot{x}}{\sqrt{1-(\dot{x})^2}}\right)=0  
\eeq
is given by 
\beq\label{solsqroot}
x(t)=-\frac{A_0\,\mathrm{arcsinh}\left( {e}^{-\gamma\,t}\,\left| A_0\right| \right) }{\gamma\,\left| A_0\right| }  \,,
\eeq
where $A_0$ is an integration constant.  The 
functions~\eqref{eq:Pfunc} and~\eqref{eq:Qfunc} are
$P= e^{\gamma t}/ (1-(\dot{x})^2)^{3/2}$ and $Q=0$, 
respectively, and thus the Jacobi equation yields
\beq
\label{eq:Jsqrt}
\dfrac{\mathrm{d}}{\mathrm{d}t} \left( \frac{e^{\gamma t}}{(1-(\dot{x})^2)^{3/2}} \dfrac{\mathrm{d}f}{\mathrm{d}t}   \right)= 0 \,.
\eeq
In this last equation, the term $\dot{x}$ must be 
understood as the time-depending function 
$\dot{x}(t)= A_0 e^{-\gamma t}/\sqrt{1+A_0^2e^{-2 \gamma t }}$.
Therefore, for this model we are able to explicitly 
find the solution to Jacobi equation~\eqref{eq:Jsqrt}
\beq
f(t)=c_0- \frac{c_1}{\gamma} \frac{e^{-\gamma t}}{\sqrt{1+A_0^2e^{-2 \gamma t }}}  \,,
\eeq
being $c_0$ and $c_1$ integration constants.  
We then note that this solution has a unique zero 
at the value 
$t=(1/\gamma) \log 
\left( \sqrt{c_1^2 - A_0^2 \gamma^2 c_0^2}/ 
\gamma c_0\right)$, so there are no conjugate points
for the function $f(t)$, and the solution \eqref{solsqroot}
is a minimum for the action on any interval $[0,b]$, $b\in \mathbb{R}$.

\subsection{Quartic potential model}\label{qpotential}

In this section we will develop 
a model inspired by the static 
kink of the well-known $\phi^4$ model
in quantum field theory~\cite{rama}, \cite{pichugin}.
The model can be resolved both on 
classical and quantum grounds, and 
contains soliton solutions (see below). In 
the context of brane theories, the so-called
kink model also appears by the inclusion of 
rigidity terms associated to the intrinsic curvature
in their effective actions~\cite{zlosh}, \cite{nersesyan}.\\
The Lagrangian for our model is
\beq
L(t,x,\dot{x})= \frac{1}{2}(\dot{x})^2 + \frac{\lambda}{4} \left(x^2 - \frac{m^2}{\lambda} \right)^2  \,,
\eeq
where $m$ and $ \lambda>0$ are arbitrary real constants.
The Euler-Lagrange equation associated to this Lagrangian
reads
\beq\label{elquartic}
\ddot{x}- \lambda x \left(x^2 - 
\frac{m^2}{\lambda} \right) = 0 \,.
\eeq
This equation is of L\'enard type: $\ddot{x}+f(x)\dot{x}+g(x)=0$, where $f\equiv 0$ and
$g(x)=- \lambda x \left(x^2 - \frac{m^2}{\lambda} \right)$. The change of variables $u(x)=\dot{x}$
converts it in the first-order equation $uu'=\lambda x \left(x^2 - \frac{m^2}{\lambda} \right)$,
which is immediately integrated to give
\beq\label{integral}
\int \frac{\mathrm{d}x}{\sqrt{\lambda x^4 -2m^2 x^2 -2b}}=\frac{1}{\sqrt{2}}\int \mathrm{d}t=\frac{t-a}{\sqrt{2}},
\eeq
where $a,b\in\mathbb{R}$ are integration constants. The solutions commonly found in the
literature (cited above) are obtained by taking $b=-\frac{m^4}{2\lambda}$, so to get a perfect
square in the radicand of \eqref{integral}. In this way, the resulting solutions are:
$$
y(t)=\pm\frac{m}{\sqrt{\lambda}}\tanh \left( \frac{m(t-a)}{\sqrt{2}} \right)  \,.
$$
The solution 
with the plus sign is commonly termed the kink solution, 
while the one with the minus sign is called the anti-kink
solution. Both solutions are bounded by the values 
$\pm m/\sqrt{\lambda}$. In particular, 
the energy density of the kink solution goes as
the fourth power of the hyperbolic secant, and 
is localised within a width characterised by 
the quantity $l/m$~\cite{rama}.
However, other solutions exist. For instance, we could as well take $b=0$ in \eqref{integral}
to get (through an obvious change of variable):
$$
\frac{t-a}{\sqrt{2}}=\int \frac{\mathrm{d}x}{\sqrt{\lambda x^4 -2m^2 x^2}}=\frac{1}{m\sqrt{2}}
\int \frac{\mathrm{d}\eta}{\eta \sqrt{\eta^2 -1}}=\frac{1}{m\sqrt{2}}\arcsin \eta \, ,
$$
and hence the solution to the Euler-Lagrange equation \eqref{elquartic}:
\beq\label{solquartic}
x(t)=m\sqrt{\frac{2}{\lambda}}\sec (m(t-a)).
\eeq
In order to get the Jacobi equation~\eqref{eq:Jacobi},
we consider the functions $P=1>0$ and $Q=3 \lambda x^2 - m^2$,
which set the equation for the function $f(t)$:
\beq
\label{eq:Jkink}
\dfrac{\mathrm{d}^2f}{\mathrm{d}t^2}+m^2 \left( 1-6\sec^2 (m(t-a)) \right)f=0  \,.
\eeq
We can take $a=0$ and $m=1$ without loss of generality (these are just re-scalings). Then,
the Jacobi equation has the form $\ddot{f}+\phi (t)f=0$, where
$\phi (t)=1-\sec^2 t\leq 0$ in the interval $]-\pi/2,\pi/2[$. At the points $t=\pm \pi/2$,
the solutions have a blow-up and are not defined (so they can not be extended beyond
these points). Thus, the solutions to the Jacobi equation are defined on $]-\pi/2,\pi/2[$,
do not possess conjugate points in this interval (see Theorem \ref{nozeros}) and the
solution \eqref{eq:Jkink} is a true minimum on $]-\pi/2,\pi/2[$.

\subsection{Probability density and maximal entropy}\label{entropia}

In this section we implement a constrained 
Lagrangian system related to a probability density
function~\cite{stir}.  
In Bayesian probability theory and in statistical 
mechanics, this system is related to the 
principle of maximum entropy \cite{jaynes}, which also appears in other branches of physics, and in chemistry
and biology~\cite{buck}, \cite{klein}, \cite{martyushev}, \cite{wang}. 
The model is defined as follows.  Let $Z$ 
be a random variable, and let $\rho(x)$ 
its associated density function, so $\rho :\mathbb{R}\to ]0,+\infty[$.
Suppose that we know the second order momentum
\beq
\sigma^2 = \int_{\mathbb{R}}x^2 \rho(x)\mathrm{d}x  \,,
\eeq
and that we want to obtain the least 
biased density function $\rho(x)$.  
This may be written as the problem of finding 
the maximals for the entropy functional (defined in terms of the information theory):
\beq
S(\rho )=- \int_{\mathbb{R}} \rho (x)\log\rho (x)\mathrm{d}x,
\eeq 
and subject to the 
constraints
\beq
\label{eq:entrocons1}
\int_{\mathbb{R}} \rho(x) \mathrm{d}x = 1  \,,\\
\label{eq:entrocons2}
\int_{\mathbb{R}} x^2 \rho(x) \mathrm{d}x = \sigma^2  \,.
\eeq
Notice that the Lagrangian here, is defined on $U=\mathbb{R}\times ]0,+\infty [\times\mathbb{R}$,
although its dependence on the first and third variables is trivial.\\
Thus, we have the three functionals (in the notation of subsection \ref{sec:constraints})
\begin{align*}
&S(\rho )=- \int_{\mathbb{R}} \rho (x)\log\rho (x)\mathrm{d}x, \\
&K_1 (\rho )=\int_{\mathbb{R}} \rho(x) \mathrm{d}x ,\\
&K_2 (\rho )=\int_{\mathbb{R}} x^2 \rho(x) \mathrm{d}x .
\end{align*}
It is immediate to compute the variations:
\begin{align*}
&\delta S(\rho ,h)=- \int_{\mathbb{R}} h(x)(1+\log\rho (x))\mathrm{d}x, \\
&\delta K_1 (\rho ,h)=\int_{\mathbb{R}} h(x) \mathrm{d}x ,\\
&\delta K_2 (\rho ,h)=\int_{\mathbb{R}} x^2 h(x) \mathrm{d}x ,
\end{align*}
so it is obvious that they are weakly continuous. Let us apply the theorem \ref{th:multipliers} on
Lagrange multipliers. The case $\det (\delta K_i (y,h_l ))=0$ ($1\leq i,l\leq 2$), would lead to
$$
\det \left(
\begin{array}{cc}
\int_{\mathbb{R}} h_1(x) \mathrm{d}x & \int_{\mathbb{R}} h_2(x) \mathrm{d}x \\ 
\int_{\mathbb{R}} x^2 h_1(x) \mathrm{d}x & \int_{\mathbb{R}} x^2 h_2(x) \mathrm{d}x \\ 
\end{array} 
\right)=0
$$
for arbitrary $h_1 ,h_2 \in X$, or:
$$
\frac{\int_{\mathbb{R}} x^2 h_1(x) \mathrm{d}x}{\int_{\mathbb{R}} h_1(x)}=
\frac{\int_{\mathbb{R}} x^2 h_2(x) \mathrm{d}x}{\int_{\mathbb{R}} h_2(x)},
$$
which is absurd. Thus, we can introduce two Lagrange multipliers $\lambda_1 ,\lambda_2 $ and consider
the Lagrangian
$$
L(x,\rho ,\rho')=-\rho (x) \log\rho (x) + \lambda_1 \rho (x)+\lambda_2 x^2 \rho (x).
$$
The Euler-Lagrange equation yields
$$ 
-\log \rho (x) - 1 + \lambda_1 + \lambda_2 x^2 = 0 
$$
with solution $\rho(x) = e^{-1 + \lambda_1 + \lambda_2 x^2}$.
Substitution of this solution into the 
constraints~\eqref{eq:entrocons1} and~\eqref{eq:entrocons2}
implies that the Lagrange multipliers 
are equal to $\lambda_1 = 1 +\log\frac{1}{\sqrt{2 \pi} \sigma}$ and $\lambda_2 = -1/(2 \sigma^2)$, 
respectively. Thus, the solution of the 
Euler-Lagrange equation reads
\beq\label{rho}
\rho(x) = (\sqrt{2 \pi}\, \sigma)^{-1}\exp\left(-x^2/(2 \sigma^2)\right).
\eeq

\noindent Finally, we see that the original Lagrangian  
$L_0(x,\rho ,\rho')=-\rho (x)\log\rho (x)$, can
actually be seen as a real function of a single variable
on $]0,+\infty [$, for which the second derivative
$L''_0 (\rho )=-1/\rho $ is always negative. Therefore,
$L_0$ is concave and, due to Theorem \ref{thm:convex} and Remark \ref{concave},
the solution obtained is a (global) maximal in $]0,+\infty [$.

\begin{rem}
\emph{A posteriori}, we see that the solution we have found, \eqref{rho},
belongs to $\mathcal{C}^1(\mathbb{R})$. However, this is not obvious \emph{a priori}. Indeed, the method of Lagrange
multipliers is \emph{not} the best one to deal with the problem involving higher order moments, precisely because the
eventual solution may lie outside the space from which we start, see \cite{harremoes}. 
\end{rem}

\textbf{Acknowledgements:} The research oh the third author (JAV) was partially supported by the Mexican
Consejo Nacional de Ciencia y Tecnolog\'ia CONACyT Project CB-2012 179115.


\begin{thebibliography}{99}

\bibitem{adams}
R.~A.~Adams,
\emph{Sobolev spaces}
(Academic Press, 1978).

\bibitem{agarwal}
R.~P.~Agarwal and
D.~O'Regan,
\emph{An introduction to ordinary differential equations}
(Springer Verlag, 2008).

\bibitem{baez}
J.~C.~Baez,
\emph{Spin foam models},
Class.~Quant.~Grav. {\bf 15} 
(1998) 
1827--1858.

\bibitem{fairlie}
L.~M.~Baker and 
D.~B.~Fairlie,
\emph{Hamilton-Jacobi equations and Brane associated 
Lagrangians}, 
Nucl.~Phys.~{\bf B596} 
(2001) 348 --364.

\bibitem{basdevant}
J.~L.~Basdevant,
\emph{Variational principles in Physics}
(Springer, 2010).

\bibitem{berezin}
V.~Berezin, \emph{Square-root quantization: application to quantum black holes}, Nucl. Phys. Proc. Suppl. {\bf 57} (1997) 181--183. 

\bibitem{binney}
J.~Binney and
S.~Tremain,
\emph{Galactic dynamics}
(Princeton University Press, 1994).

\bibitem{buck}
B.~Buck and 
V.~A.~Macaulay (eds),
\emph{Maximum Entropy in Action: A Collection of Expository Essays}
(Oxford University Press, 1991).

\bibitem{burghes}
D.~N.~Burghes and
A.~Graham,
\emph{Introduction to Control Theory, Including Optimal Control}
(Wiley, 1980).

\bibitem{carlini2}
A.~Carlini,
V.~P.~Frolov,
M.~B.~Mensky,
I.~D.~Novikov, and
H.~H.~Soleng,
\emph{Time machines: the Principle of Self-Consistency
as a consequence of the Principle of Minimal Action}
Int.~J.~Mod.~Phys. {\bf D4} 
(1995) 
557--580.

\bibitem{carlini}
A.~Carlini, and
J.~Greensite
 \emph{Square Root Actions, Metric Signature, and the 
 Path-Integral of Quantum Gravity},
Phys.~Rev. {\bf D52} 
(1995) 6947 --6964. 

\bibitem{carlip}
S.~Carlip,
 \emph{$(2+1)$-Dimensional Chern-Simons Gravity as a Dirac 
 Square Root},
Phys.~Rev. {\bf D45}
(1992) 3584 --3590; 
Erratum-ibid {\bf D47} 
(1993) 1729.

\bibitem{chandra}
S.~Chandrasekhar, 
\emph{An Introduction to the Study of Stellar Structure}
(Dover publications, 1967).

\bibitem{coffman}
C.~V.~Coffman and
J.~S.~V.~Wong,
\emph{Oscillation and nonoscillation of solutions of generalized Emden-Fowler equations},
Transactions of the AMS \textbf{167} (1972) 399--434.

\bibitem{collins}
G.~W.~Collins,
\emph{The fundamentals of stellar astrophysics}
(Freeman, 1989).

\bibitem{curtain}
R.~F.~Curtain and
A.~J.~Pritchard,
\emph{Functional analysis in modern applied mathematics}
(Academic Press, 1977).

\bibitem{emden}
R.~Emden, 
\emph{Gaskugeln, Anwendungen der mechanischen Warmen-theorie 
auf Kosmologie und meteorologische Probleme}
(B. G. Teubner, 1907).

\bibitem{fiziev}
P.~P.~Fiziev,
 \emph{Relativistic Hamiltonian with square root in the path 
 integral formalism},
Theor.~Math.~Phys. {\bf 62} 2, 
(1985) 123--130.

\bibitem{flett}
T.~M.~Flett,
\emph{Differential analysis}
(Cambridge UP, 1980).

\bibitem{fowler}
R.~H.~Fowler,
 \emph{The form near infinity of real, continuous solutions of a certain %
 differential equation of the second order}, 
Quart.~J.~Math. {\bf 45}
(1914) 289 --350; 
 \emph{Some results on the form near infinity of real continuous 
 solutions of a certain type of second order differential equations}, 
Proc.~London Math.~Soc. {\bf 13} 
(1914) 341 --371;
 \emph{The solution of Emden's and similar differential equations}, 
Monthly Notices Roy.~Astro.~Soc., {\bf 91} 
(1930) 63--91;
 \emph{Further studies of Emden's and similar differential equations},
Quart.~J.~Math. {\bf 2} 
(1931) 259 --288.

\bibitem{fox}
C.~Fox,
\emph{An introduction to the calculus of variations},
Cambridge UP (1963), reprinted by Dover (1987).

\bibitem{louko}
J.~L.~Friedman, 
J.~Louko and
S.~N.~Winters-Hilt,
 \emph{Reduced phase space formalism for spherically 
 symmetric geometry with a massive dust shell},
Phys.~Rev. {\bf D56} 
(1997) 7674 --7691.

\bibitem{gelfand}
I.~M.~Gelfand and
S.~V.~Fomin,
\emph{Calculus of variations}
(Dover, 2000).

\bibitem{pedroluis}
P.~L.~Garc\'ia,
\emph{The Poincar\'e-Cartan invariant in the calculus of variations},
in Symposia Mathematica XIV, pp. 5571-610,
(Academic Press, 1974).

\bibitem{garrett}
B.~C.~Garrett,
N.~Abusalbi,
D.~J.~Kouri and 
D.~G.~Truhlar,
\emph{Test of variational transition state theory and the least action approximation for multidimensional tunneling probabilities against accurate quantal rate constants for a collinear reaction involving tunneling into an excited state},
J. Chem. Phys. \textbf{83}
(1985) 2252.

\bibitem{sardanas}
G.~Giachetta,
L.~Mangiarotti and
G.~Sardanashvily,
\emph{Advanced classical field theory},
(World Scientific, 2009).

\bibitem{giaquinta}
M.~Giaquinta and
S.~Hildebrandt,
\emph{Calculus of variations I: The Lagrangian formalism}
(Springer Verlag, 1996).

\bibitem{goenner}
H.~Goenner and 
P.~Havas,
 \emph{Exact solutions of the generalized Lane--Emden equation},
J.~Math.~Phys. {\bf 41} 10,
(2000) 7029--7043.

\bibitem{shlomo}
H.~Goldschmidt and
S.~Sternberg,
\emph{The Hamilton-Cartan formalism in the calculus of variations},
Ann. Inst. Fourier (Grenoble) {\bf 23} (1973) 203--267.

\bibitem{goldstein}
H.~Goldstein,
C.~Poole and 
J.~Safko,
\emph{Classical Mechanics, 3rd.~ed.}
(Addison-Wesley, 2001).

\bibitem{gzlz}
G.~Gonz\'alez,
 \emph{Relativistic motion with linear dissipation},
Int.~J.~Theor.~Phys. {\bf 46} 3, 
(2007) 417--423.

\bibitem{novikov}
C.~G.~Gray,
G.~Karl, and 
V.~A.~Novikov,
\emph{Progress in classical and quantum variational principles},
Rep.~Prog.~Phys. {\bf 67} 
(2004) 
159--208.

\bibitem{gray0}
C.~G.~Gray and
E.~Poisson,
\emph{When action is not least for orbits in general relativity},
Am. J. Phys. \textbf{79} 1,
(2011) 43-56.

\bibitem{gray}
C.~G.~Gray and
E.~F.~Taylor,
\emph{When action is not least},
Am. J. Phys. \textbf{75} 5,
(2007) 434-458.

\bibitem{hamber}
H.~W.~Hamber and
R.~M.~Williams,
\emph{Discrete gravity in one dimension},
Nuclear Physics \textbf{B451} (1995) 305--324.

\bibitem{hand}
L.~N.~Hand and
J.~D.~Finch,
\emph{Analytical Mechanics}
(Cambridge University Press, 1998).

\bibitem{harremoes}
P.~Harremo\"es and
F. Tops\o e,
\emph{Maximum Entropy Fundamentals},
Entropy \textbf{3} (2001) 191--226.

\bibitem{hermes}
H.~Hermes and
J.~P.~Lasalle,
\emph{Functional analysis and time optimal control}
(Academic Press, 1969).

\bibitem{horedt}
G.~P.~Horedt,
\emph{Polytropes: Applications in astrophysics and related fields},
(Kluwer, 2004).

\bibitem{jaynes}
E.~T.~Jaynes,
\emph{Information theory and statistical mechanics},
Phys. Rev. \textbf{106} (1957) 620-630.

\bibitem{saletan}
J.~V.~Jos\'e and 
E.~J.~Saletan,
\emph{Classical Dynamics, a contemporary approach}
(Cambridge University Press, 1998).

\bibitem{pichugin}
A.~A.~Kapustnikov, 
A.~Pashnev and 
A.~Pichugin,
 \emph{The canonical quantization of the kink--model beyond 
 the static solution},
Phys.~Rev. {\bf D55}
(1997) 2257 --2264.

\bibitem{klein}
J.~F.~Klein,
\emph{Physical significance of entropy or of the 
second law}
(Cornell University Library, 2009).

\bibitem{krupkova1}
O. Krupkov\'a,
\emph{The geometry of ordinary variational equations},
Lecture Notes in Mathematics {\bf 1678},
(Springer Verlag, 1997).

\bibitem{krupkova2}
O. Krupkov\'a and D. J. Saunders (eds.),
\emph{Variations, geometry and physics},
(Nova Science Publisheres, 2009).

\bibitem{lane}
I.~J.~H.~Lane, 
 \emph{On the theoretical temperature of the sun under the 
 hypothesis of a gaseous mass maintaining its volume by its 
 internal heat and depending on the laws of gases known to 
 terrestial experiment}, 
Amer.~J.~Sci.~and Arts, {\bf 4}
(1870) 57 --74.

\bibitem{lebedev}
L.~P.~Lebedev and
M.~J.~Cloud,
\emph{The calculus of variations and functional analysis (with optimal control and applications in mechanics)}
(World Scientific, 2003).

\bibitem{lucha}
W.~Lucha and 
F.~F.~Sch\"oberl,
 \emph{Relativistic Coulomb Problem: Energy Levels at the 
 Critical Coupling Constant Analytically},
Phys.~Lett. {\bf B387} 
(1996) 573 --576.

\bibitem{martyushev}
L.~M.~Martyushev and
V.~D.~Seleznev,
\emph{Maximum entropy production principle in physics, chemistry and biology},
Physics Reports \textbf{426} (2006) 1--45.

\bibitem{menotti}
P.~Menotti,	
\emph{Hamiltonian structure of 2+1 dimensional gravity},
In: Recent developments in general relativity, 14th SIGRAV Conference on General Relativity and Gravitational Physics, Genova, Italy (2000). Edited by R.~Cianci, R.~Collina, 
M.~Francaviglia, P.~Fr\'e. (Springer, 2002).

\bibitem{moore}
T.~A.~Moore,
\emph{Getting the Most Action from the Least Action: A proposal},
Am. J. Phys. \textbf{72} 4,
(2004) 522-527.

\bibitem{nersesyan}
A.~P.~Nersesyan,
 \emph{Hamiltonian formalism for particles with a 
 generalized rigidity},
Theor.~Math.~Phys. {\bf  117} No. 1, 
(1998) 1214 --1222.

\bibitem{pars}
L.~A.~Pars,
\emph{An introduction to the calculus of variations},
Heinemann (1962), reprinted by Dover (2010).

\bibitem{puzio}
R.~Puzio, 
 \emph{On the square root of the Laplace--Beltrami operator 
 as a Hamiltonian}
Class.~Quantum Grav. {\bf 11}
(1994) 609.

\bibitem{rama}
R.~Rajaraman, 
\emph{Solitons and Instantons} 
(North--Holland Publishing, 1988).

\bibitem{ramond}
P.~Ramond, 
\emph{Field theory: A modern primer (Frontiers in Physics series Vol.~74)} 
(Westview Press, 2001).

\bibitem{razavy}
M.~Razavy,
\emph{Classical And Quantum Dissipative Systems}
(Imperial College Press, 2006).

\bibitem{rojas}
E.~Rojas,
 \emph{Higher order curvature terms in Born-Infeld type 
 brane theories},
Int.~J.~Mod.~Phys. {\bf D20}
(2011) 59 --75.

\bibitem{saunders}
D.~J.~Saunders,
\emph{An alternative approach to the Cartan form in Lagrangian field theories},
J. Phys. A {\bf 20} (1987) 339--349.

\bibitem{sagan}
H.~Sagan,
\emph{Introduction to the calculus of variations}
(Dover, 1992).

\bibitem{simmons}
G.~F.~Simmons and
S.~G.~Krantz,
\emph{Differential Equations: Theory, Technique, and Practice}
(McGraw-Hill, 2006).

\bibitem{smith}
D.~Smith,
\emph{Variational methods in optimization}
(Dover, 1998).

\bibitem{stephani}
H.~Stephani, 
D.~Kramer, 
M.~MacCallum, 426
C.~Hoenselaers and 
E.~Herlt,
\emph{Exact Solutions of Einstein's Field Equations}
(Cambridge University Press, 2003).

\bibitem{stir}
D.~Stirzaker,
\emph{Elementary probability}
(Cambridge University Press, 2003).

\bibitem{sussmann}
H.~J.~Sussmann and
J.~C.~Willems,
\emph{$300$ years of optimal control: from the brachystochrone
problem to the maximum principle},
IEEE Control Systems \textbf{17}
(1997) 32--44.

\bibitem{taylora}
E.~F.~Taylor,
\emph{Guest Editorial: A Call to Action},
Am. J. Phys. \textbf{71} 5
(2003) 423-425 .

\bibitem{taylor}
J.~R.~Taylor, 
\emph{Classical mechanics}
(University Science Books, 2005).

\bibitem{thornton}
S.~T.~Thornton and
J.~B.~Marion,
\emph{Classical Dynamics of Particles and Systems}
(Brooks/Cole, 2004).

\bibitem{troutman}
J.~L.~Troutman,
\emph{Variational Calculus and Optimal Control:
Optimization With Elementary Convexity}
(Springer Verlag, 1996).

\bibitem{vanbrunt}
B.~Van~Brunt,
\emph{The calculus of variations}
(Springer Verlag, 2004).

\bibitem{wang}
Q.~A.~Wang,
\emph{Maximum entropy change and least action principle for
non equilibrium systems},
Astrophysics and Space Sciences \textbf{305}
(2006) 273.

\bibitem{whittaker}
E.~T.~Whittaker,
\emph{A Treatise on the Analytical Dynamics of Particles and Rigid Bodies, 4th.~ed} (Dover, 1937).

\bibitem{williams}
R.~M.~Williams,
\emph{Recent Progress in Regge Calculus},
Nucl.~Phys. Proc.~Suppl. {\bf 57} 
(1997) 
73--81

\bibitem{wong}
J.~S.~W.~Wong,
 \emph{On the generalized Emden-Fowler equation},
SIAM Rev. {\bf 17} 2
(1975) 339 --360.

\bibitem{zaslavski}
A.~J.~Zaslavski,
\emph{Turnpike properties in the calculus of variations and optimal control}
(Springer Verlag, 2006).

\bibitem{zeidler}
E.~Zeidler,
\emph{Nonlinear functional analysis and its applications, Vol. III: Variational methods and optimization}
(Springer Verlag, 1986).

\bibitem{zlosh}
K.~G.~Zloshchastiev,
 \emph{Quantum kink model and $SU(2)$ symmetry: Spin
 interpretation and T-violation},
J.~Phys.~A: Math.~Gen. {\bf 31}  
(1998) 6081 --6085.

\bibitem{zwiebach}
B.~Zwiebach, 
\emph{A First Course in String Theory} 
(Cambridge University Press, 2004).

\end{thebibliography}
\end{document}